\newfont{\mycrnotice}{ptmr8t at 7pt}
\newfont{\myconfname}{ptmri8t at 7pt}
\DeclareMathSymbol{\lsb@l}{\mathalpha}{letters}{`l}
\setlist[description]{font=\normalfont\itshape,itemsep=0ex,partopsep=0ex}
\DeclareBoldMathCommand{\bbeta}{\beta}
\def\thebibliography#1{%
\ifnum\addauflag=0\addauthorsection\global\addauflag=1\fi
     \section[References]{%    <=== OPTIONAL ARGUMENT ADDED HERE
        {References} % was uppercased but this affects pdf bookmarks (SP/GM October 2004)
          {\vskip -5pt plus 1pt} % GM Nov. 2006 / GM July 2000 (for somewhat tighter spacing)
         \@mkboth{{\refname}}{{\refname}}%
     }%
     \tiny\list{[\arabic{enumi}]}{%
         \settowidth\labelwidth{[#1]}%
         \leftmargin\labelwidth
         \advance\leftmargin\labelsep
         \advance\leftmargin\bibindent
         \parsep=0pt\itemsep=1pt % GM July 2000
         \itemindent -\bibindent
         \listparindent \itemindent
         \usecounter{enumi}
     }%
     \let\newblock\@empty
     \raggedright % GM July 2000
     \sloppy
     \sfcode`\.=1000\relax
}
\definecolor{bleu}{rgb}{0,0,0.8}
\definecolor{orange}{rgb}{0.8,0.4,0}
\def\gathen#1{{#1}}
\newtheorem{theo}{Theorem}
\newtheorem{lem}[theo]{Lemma}
\newtheorem{prop}[theo]{Proposition}
\newcommand{\N}{\mathbb N}
\newcommand{\F}{\mathbb F}
\newcommand{\Q}{\mathbb Q}
\newcommand{\softO}{O\tilde{~}}
\newcommand{\Sdp}{\ell[[t]]^{\mathrm{dp}}}
\title{Computation of the Similarity Class of the p-Curvature\titlenote{\small %
We warmly thank the referees
for their very helpful comments, and for pointing out a mistake in an earlier version of the article.
We thank 
M. Giesbrecht, G.~Labahn and 
A. Storjohann for useful discussions. The third author is supported by NSERC.
\vspace{-25pt}}}
\author{
\alignauthor Alin Bostan\\
\affaddr{Inria (France)}\\
  \affaddr{\textsf{alin.bostan@inria.fr}}
\alignauthor Xavier Caruso\\
  \affaddr{Universit\'e Rennes 1 (France)}\\
  \affaddr{\textsf{xavier.caruso@normalesup.org}}
\alignauthor \'Eric Schost\\
  \affaddr{Univ. of Waterloo (Canada)}\\
  \affaddr{\textsf{eschost@uwaterloo.ca}}
}
\begin{document}

\maketitle

\begin{abstract}
The $p$-curvature of a system of linear differential equations in positive
characteristic $p$ is a matrix that measures how far the system is from having
a basis of polynomial solutions. We show that the similarity class of the
$p$-curvature can be determined without computing the $p$-curvature itself.
More precisely, we design an algorithm that computes the invariant factors of
the $p$-curvature in time quasi-linear in~$\sqrt p$. This is much less than
the size of the $p$-curvature, which is generally 
linear in~$p$. The new algorithm allows to answer a question originating from the study of the Ising model in statistical~physics. 
\end{abstract}

\begin{CCSXML}
<ccs2012>
<concept>
<concept_id>10010147.10010148.10010149.10010150</concept_id>
<concept_desc>Computing methodologies~Algebraic algorithms</concept_desc>
<concept_significance>500</concept_significance>
</concept>
</ccs2012>
\end{CCSXML}

\vspace{-1mm}
\ccsdesc[500]{Computing methodologies~Algebraic algorithms}
\printccsdesc

\vspace{-1.5mm}
\keywords{differential equations; $p$-curvature; algebraic complexity}

\medskip

%%%%%%%%%%%%%%%%%%%%%%%%%%%%%%%%%%%%%%%%%%%%%%%%%%%%%%%%%%%%
%%%%%%%%%%%%%%%%%%%%%%%%%%%%%%%%%%%%%%%%%%%%%%%%%%%%%%%%%%%%
%%%%%%%%%%%%%%%%%%%%%%%%%%%%%%%%%%%%%%%%%%%%%%%%%%%%%%%%%%%%

\section{Introduction}\label{sec:intro}

Differential equations in positive characteristic~$p$ are important and
well-studied objects in mathematics~\cite{Honda81,Put95,Put96}. The main
reason is arguably one of Grothendieck’s (still unsolved)
conjectures~\cite{Katz72,Katz82,Andre04}, stating that a linear differential
equation with coefficients in $\mathbb{Q}(x)$ admits a basis of algebraic
solutions if and only if its reductions modulo (almost) all primes~$p$ admit a
basis of polynomial solutions modulo~$p$. Another motivation stems from the
fact that the reductions modulo prime numbers yield useful information about
the factorization of differential operators in characteristic zero.

To a linear differential equation in fixed characteristic~$p$, or more
generally to a system of such equations, is attached a simple yet very useful
object, the \emph{$p$-curvature}. Let~$\F_q$ be the finite field with~$q=p^a$
elements. The $p$-curvature of a system of linear differential equations with
coefficients in~$\F_q(x)$ is a matrix with entries in~$\F_q(x)$ that measures
the obstructions for the given system to possess a fundamental matrix of
polynomial solutions in~$\F_q[x]$. Its definition is remarkably simple,
especially at a higher level of generality: the $p$-curvature of a
differential module~$(M,\partial)$ of dimension~$r$ over~$\F_q(x)$ is the
``differential-Frobenius-map'' $\partial^p = \partial \circ \cdots
\circ\partial$ ($p$~times). When applied to the differential module
canonically attached with the system $Y' = A(x)Y$, the $p$-curvature
materializes into the $p$-th iterate $\partial_A^p$ of the map $\partial_A :
\F_q(x)^r \rightarrow \F_q(x)^r$ that sends $v$ to $v' - A v$, or more
concretely, into the matrix $A_p(x)$ of this map with respect to the canonical
basis of $\F_q(x)^r$. It is given as the term $A_p$ of the sequence $(A_i)_i$
of matrices in ${M}_{r}(\F_q(x))$ defined by

\smallskip\centerline{$A_1 = -A \quad \text{and} \quad A_{i+1} = A'_i - A \cdot A_i \quad \text{for} \quad i \geq 1.$}

\smallskip

{}From a computer algebra perspective, many effectivity questions naturally
arise. They primarily concern the algorithmic complexity of various operations
and properties related to the $p$-curvature: How fast can one compute $A_p$?
How fast can one decide its nullity? How fast can one determine its minimal
and characteristic polynomial? Apart the fundamental nature of these questions
from the algebraic complexity theory viewpoint, there are concrete motivations
for the efficient computation of the $p$-curvature, coming from various
applications, notably in enumerative combinatorics and statistical
physics~\cite{BoKa08b,BoKa08a,BBHMWZ08}.

We pursue the algorithmic study of the $p$-curvature, initiated
in~\cite{BoSc09,BoCaSc14,BoCaSc15}. In those articles, several questions were
answered satisfactorily, but a few other problems were left open. In summary, the
current state of affairs is as follows. First, the $p$-curvature~$A_p$ can be
computed in time $O(\log p)$ when $r=1$ and $\softO(p)$ when $r>1$. The soft-O
notation $\softO(\,)$ indicates that polylogarithmic factors in the argument
of $O(\,)$ are deliberately not displayed. These complexities match, up to
polylogarithmic factors, the generic size of~$A_p$. Secondly, one can decide
the nullity of~$A_p$ in time~$\softO(p)$ and compute its characteristic polynomial in time
$\softO(\sqrt{p})$. It is not known whether the exponent
$1/2$ is optimal for the last problem. In all these estimates, the
complexity (``time'') measure is the number of arithmetic operations $(\pm,
\times, \div)$ in the ground field~$\F_q$, and the dependence is expressed in
the main parameter~$p$ only. Nevertheless, precise estimates are also
available in terms of the other parameters of the input.

In the present work, we focus on the computation of all the invariant factors
of the $p$-curvature, and show that they can also be determined in
time~$\softO(\sqrt{p})$. Previously, this was unknown even for the minimal
polynomial of $A_p$ or for testing the nullity of~$A_p$. The fact that a sublinear cost could in principle be
achievable, although~$A_p$ itself has a total arithmetic size linear in $p$,
comes from the observation that the coefficients of the invariant factors of
$A_p$ lie in the subfield $\F_q(x^p)$ of $\F_q(x)$, in other words they are
very sparse.

To achieve our objective, we blend the methods used in our previous works~\cite{BoCaSc14} and~\cite{BoCaSc15}. 
The first key ingredient is the construction, for any point~$a$ in the algebraic closure of~$\F_q$ that is not a pole of~$A(x)$, of a matrix $Y_a$
with entries in $\ell = \F_q(a)$ which is similar to the evaluation $A_p(a)$
of the $p$-curvature at the point~$a$.
This construction comes from~\cite{BoCaSc15} and ultimately relies 
on the existence of a well-suited ring, of so-called \emph{Hurwitz series in $x-a$}, for which an analogue of the Cauchy–Lipschitz theorem holds for the system $Y'=A(x)Y$ around the (ordinary) point $x=a$. The matrix~$Y_a$ is the $p$-th coefficient of the fundamental matrix of Hurwitz series solutions of $Y'=A(x)Y$ at $x=a$.

The second key ingredient is a baby step / giant step algorithm that computes~$Y_a$ in $\softO(\sqrt{p})$ operations in $\ell$ via fast matrix factorials. Finally, we recover the invariant factors of~$A_p$ from those of the matrices~$Y_a$, for a suitable number of values~$a$.
The main difficulty in this interpolation process is that there exist badly behaved points~$a$ for which the invariant factors of $A_p(a)$ are not the evaluations at~$a$ of the invariant factors of $A_p(x)$. The remaining task is then to bound the number of unlucky evaluation points~$a$. The key feature allowing a good control on these points, independent of~$p$, is the fact that the invariant factors of $A_p(x)$ have coefficients in $\F_q(x^p)$. 

\smallskip\noindent{\bf Relationship to previous work.}  There exists
a large body of work concerning the computation of so-called Frobenius
forms of matrices (that is, the list of their invariant factors,
possibly with corresponding transformation matrices), and the related
problem of Smith forms of polynomial matrices. The specificities of
our problem prevent us from applying these methods directly; 
however, our work is related to several of these previous results.

Let~$\omega$ be a feasible exponent for matrix multiplication. 
The best deterministic algorithm known so far
for the computation of the Frobenius form of an $n \times n$ matrix over a
field~$k$
is due to Storjohann~\cite{Storjohann01}. This algorithm has running time $O(n^\omega \log(n)
\log\log(n))$ operations in $k$.
We will use it to compute
the invariant factors of the matrices $Y_a$ above.  Las Vegas
algorithms were given by Giesbrecht~\cite{Giesbrecht95},
Eberly~\cite{Eberly00} and Pernet and Storjohann~\cite{PeSt07}, the
latter having expected running time $O(n^\omega)$ over sufficiently
large fields.

The case of matrices with integer or rational entries has attracted a
lot of attention; this situation is close to ours, with the bit size
of integers playing a role similar to the degree of the 
entries in the $p$-curvature. Early work goes back to algorithms of
Kaltofen {\it et al.}~\cite{KaKrSa87,KaKrSa90} for the Smith form of
matrices over $\Q[x]$, which introduced techniques used in several
further algorithms, such as the Las Vegas algorithm by Storjohann and
Labahn~\cite{StLa97}. Giesbrecht's PhD thesis~\cite{GiesbrechtPhD}
gives a Las Vegas algorithm with expected cost
$O\tilde{~}(n^{\omega+2} d)$ for the Frobenius normal form of an $n
\times n$ matrix with integer entries of bit size $d$; Storjohann
and Giesbrecht substantially improved this result in~\cite{GiSt02},
with an algorithm of expected cost $O\tilde{~}(n^4 d + n^3 d^2)$.
The best Monte Carlo running time known to us is
$O\tilde{~}(n^{2.698} d)$, by Kaltofen and Villard~\cite{KaVi04}.

In the latter case of matrices with integer coefficients, a common
technique relies on reduction modulo primes, and a main source of
difficulty is to control the number of ``unlucky'' reductions. We
pointed out above that this is the case in our algorithm as well.  In
general, the number of unlucky primes is showed to be $O\tilde{~}(n^2
d)$ in~\cite{GiesbrechtPhD}; in our case, the degree $d$ of the
entries grows linearly with $p$, but as we said above, we can
alleviate this issue by exploiting the properties of the
$p$-curvature. Storjohann and Giesbrecht proved in~\cite{GiSt02} that
a candidate for the Frobenius form of an integer matrix can be verified using only
$O\tilde{~}(nd)$ primes; it would be most interesting to adapt this 
idea to our situation.

\medskip\noindent{\bf Structure of the paper.} In
Section~\ref{sec:invfact}, we recall the main theoretical properties of the
invariant factors of a polynomial matrix, and study their behavior under specialization. We obtain bounds on bad evaluation points, and use them to design (deterministic and probabilistic) evaluation-interpolation algorithms 
for computing the invariant factors of a polynomial matrix.
Section~\ref{sec:pcurv} is devoted to the design of our main algorithms for the similarity class of the $p$-curvature, with deterministic and probabilistic versions for both the system case and the scalar case. 
Finally, Section~\ref{sec:application} presents an application of our algorithm, that allows to answer a question coming from theoretical physics.

\medskip\noindent{\bf Complexity basics.} 
We use standard complexity notation, such as $\omega$ for the exponent of matrix multiplication.
The best known upper bound is $\omega <2.3729$ from~\cite{LeGall14}. 
Many arithmetic operations on
univariate polynomials of degree $d$ in $k[x]$ can be performed in $\softO(d)$
operations in the field~$k$: addition, multiplication, shift, interpolation,
\emph{etc}, the key to these results being fast polynomial
multiplication~\cite{Schoenhage77,CaKa91,HaHoLe14}.  A general
reference for these questions in~\cite{GaGe03}. % ScSt71

%%%%%%%%%%%

\section{\hspace{-0.18cm} Computing invariant factors of \\ \hspace{-0.18cm}  special polynomial matrices}
\label{sec:invfact}

\subsection{Definition and classical facts}

We recall here some basic facts about invariant factors of matrices 
defined over a field. We fix for now a field~$K$, and a matrix 
$M \in {M}_n(K)$. For a monic polynomial $P = T^d - \sum_{i=0}^{d-1} a_i 
T^i \in K[T]$, let $M_P$ denote its companion matrix:
$$M_P = \left(\begin{matrix}
 &   &  & a_0 \\
1 &   &  & a_1 \\
 &  \ddots &  & \vdots \\
 &   & 1 & a_{d-1}
\end{matrix}\right).$$

A well-known theorem~\cite[Th.~9, Ch.~VII]{Gantmacher59} asserts that there exist a unique sequence of 
monic polynomials $I_1, \ldots, I_n$ for which $I_j$ divides 
$I_{j+1}$ for all $j$ and $M$ is similar to a block diagonal matrix 
whose diagonal entries are $M_{I_1}, \ldots, M_{I_n}$. The $I_j$'s
are called the \emph{invariant factors} of $M$. We emphasize that, with
our convention, there are always $n$ invariant factors but some of
them may be equal to $1$, in which case the corresponding companion
matrix is the empty one. Under this normalization, the $j$-th invariant 
factor $I_j$ can be obtained as $I_j = G_j/G_{j-1}$, where $G_j$ is
the gcd of the minors of size $j$ of the 
matrix $T \text{I}_n - M$, where $\text{I}_n$ stands for the identity 
matrix of size $n$.
The invariant factors are closely related to the characteristic
polynomial; indeed, we have 
\begin{equation}
\label{eq:Ijchi}
I_1 \cdot I_2 \cdots I_n = G_n = \det(T \text{I}_n - M).
\end{equation}

Given some irreducible polynomial $P$ in~$K[T]$, we 
consider the sequence (of integers):
\begin{equation}
\label{eq:seqdim}
e \mapsto d_{P,e} = \frac{\dim_K \ker P^e(M)}{\deg P}.
\end{equation}
It turns out that this sequence completely determines the $P$-adic valuation 
of the invariant factors. Indeed, denoting by $v_j$ the $P$-adic
valuation of $I_j$, we have the relations:
\begin{align}
d_{P,e} & = \sum_{j=1}^n \min(e, v_j), \label{eq:dim} \smallskip \\
d_{P,e} - d_{P,e-1} & = \text{Card} \{ j \: | \: v_j\:{\geq}\:e \} \label{eq:diffdim}
\end{align}
from which the $v_j$'s can be recovered without ambiguity since they
form a nondecreasing sequence. It also follows from the above
formula that the sequence $e \mapsto d_{P,e}$ is concave and eventually 
constant. Its final value is the dimension of the characteristic subspace
associated to $P$ and it is reached as soon as $e$ is greater than or 
equal to $v_n$.

\subsection{Behaviour under specialization}

Let $k$ be a perfect field of characteristic $p$. We consider~a matrix $M(x)$ with coefficients in 
$k[x]$. For an element $a$ lying in a finite extension $\ell$ of $k$, 
we denote by $M(a)$ the image of $M(x)$ under the mapping $k[x] \to 
\ell$, $x \mapsto a$. Our aim is to compare the invariant factors of
$M(x)$ and those of $M(a)$.

We introduce some notation. Let $I_1(x,T), \ldots, I_n(x,T)$ be the
invariant factors of $M(x)$.
It follows from the relation~\eqref{eq:Ijchi} that they all lie in
$k[x,T]$. We can therefore evaluate them at $x=a$ for each element
$a \in \ell$ as above and get this way univariate polynomials with
coefficients in $\ell$. Let $I_1(a,T), \ldots, I_n(a,T)$ be these
evaluations. We also consider the invariant factors of $M(a)$ and
call them $I_{1,a}(T), \ldots, I_{n,a}(T)$. We furthemore define
\begin{align*}
G_j(x,T) & = I_1(x,T) \cdot I_2(x,T) \cdots I_j(x,T) \\
\text{and}\quad
G_{j,a}(T) & = I_{1,a}(T) \cdot I_{2,a}(T) \cdots I_{j,a}(T).
\end{align*}

The characterization of the $G_j$'s in term of minors yields:
\vspace{-1ex}
\begin{lem}
\label{lem:divide}
For all $a \in \ell$ and all $j \in \{1, \ldots, n\}$, 
the polynomial $G_j(a,T)$ divides $G_{j,a}(T)$ in $\ell[T]$.
\end{lem}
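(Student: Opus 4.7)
The plan is to exploit the minor characterization of $G_j$ recalled just above. Namely, $G_j(x,T)$ is, up to making it monic in $T$, the gcd of the $j \times j$ minors of $T\,\text{I}_n - M(x)$, so every such minor $m(x,T) \in k[x,T]$ is divisible by $G_j(x,T)$.

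First I would lift this divisibility from the rational function field $k(x)[T]$ (where gcds of polynomials are well-defined up to units) to the polynomial ring $k[x,T]$. The key observation is that $G_j(x,T)$ is monic in $T$. Consequently, for any $m(x,T) \in k[x,T]$ one can perform Euclidean division of $m$ by $G_j$ inside $k[x][T]$ and obtain a quotient and remainder in $k[x,T]$; since this division coincides with the Euclidean division carried out in $k(x)[T]$, where the remainder is zero, the remainder must be zero here as well. So we get a genuinely polynomial factorization $m(x,T) = G_j(x,T)\, q_m(x,T)$ with $q_m \in k[x,T]$, for each such minor $m$.

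Next I would specialize this identity through the evaluation homomorphism $k[x,T] \to \ell[T]$, $x \mapsto a$, obtaining $m(a,T) = G_j(a,T)\, q_m(a,T)$. The polynomial $G_j(a,T)$ remains monic in $T$---its leading coefficient is $1$, independent of $x$---so it is nonzero and a bona fide divisor of $m(a,T)$ in $\ell[T]$. Since each such $m(a,T)$ ranges over the $j \times j$ minors of $T\,\text{I}_n - M(a)$, and $G_{j,a}(T)$ is (by the same minor characterization applied over $\ell$) their gcd, we conclude that $G_j(a,T) \mid G_{j,a}(T)$ in $\ell[T]$.

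The only (mild) obstacle is the first step: one must verify that the divisibility inherited from the gcd definition is not merely a statement in $k(x)[T]$, for otherwise evaluation at $x=a$ could a priori introduce denominators and wreck the specialization. Monicity of $G_j$ in $T$ is precisely what bridges this gap; the remaining steps are a routine consequence of the ring-homomorphism property of evaluation.
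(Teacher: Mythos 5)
Your proof is correct and follows exactly the line the paper intends: the authors state only that the lemma follows from the minor characterization of the $G_j$'s, and your argument is a careful unpacking of that remark. The observation that monicity of $G_j$ in $T$ is what lets the divisibility descend from $k(x)[T]$ to $k[x][T]$ and then specialize cleanly at $x=a$ is precisely the detail that makes the sketch rigorous.
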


Let $P_1(x,T), \ldots, P_s(x,T)$ be the irreducible factors of
the characteristic polynomial $\chi(x,T)$ of $M(x)$, and let us write $\chi^\textrm{sep}(x,T)$ for $P_1(x,T) \cdots P_s(x,T)$.
For all $1\leq i\leq s$ and $1\leq j\leq n$, let $e_{i,j}$ be
the multiplicity of $P_i(x,T)$ in $I_j(x,T)$.

\begin{prop}
\label{prop:goodspecialization}
We assume $\chi^\textrm{\emph{sep}}(a,T)$ is separable and 
$$\dim_{k(x)} \ker P_i(x,M(x))^{e_{i,j}+1} 
= \dim_{\ell} \ker P_i(a,M(a))^{e_{i,j}+1}$$
for all $i$ and for all $j<n$. Then $I_j(a,T) = I_{j,a}(T)$ for all~$j$.
\end{prop}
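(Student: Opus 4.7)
The plan is to show that, for each irreducible $Q\in\ell[T]$ dividing $\chi(a,T)$, the $Q$-adic valuations of $I_{j,a}$ and $I_j(a,T)$ agree. Separability of $\chi^{\text{sep}}(a,T)$ makes each factorization $P_i(a,T)=\prod_r Q_{i,r}$ squarefree in $\ell[T]$, and the polynomials $P_i(a,T)$ for different~$i$ pairwise coprime; consequently $v_{Q_{i,r}}(I_j(a,T))=e_{i,j}$ for every~$j$, so the proposition reduces to proving $v_{Q_{i,r}}(I_{j,a})=e_{i,j}$ for all $i$, $r$ and $j$.

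I would convert these valuations into kernel dimensions via~\eqref{eq:dim}. On the generic side one has $\dim_{k(x)}\ker P_i(x,M(x))^e = \mu_i\sum_j\min(e,e_{i,j})$ with $\mu_i=\deg P_i$; on the specialized side, the coprime decomposition yields $\dim_\ell\ker P_i(a,M(a))^e = \sum_r\deg Q_{i,r}\sum_j\min(e,v_{Q_{i,r}}(I_{j,a}))$. Rank semi-continuity applied to the matrix $P_i(x,M(x))^e\in k[x]^{n\times n}$ then produces
\[
\mu_i\sum_j\min(e,e_{i,j})\;\le\;\sum_r\deg Q_{i,r}\sum_j\min(e,v_{Q_{i,r}}(I_{j,a})),
\]
an equality at $e=e_{i,j}+1$ for $j<n$ by hypothesis, at $e=0$ trivially, and for $e\ge e_{i,n}$ since both sides reach the common maximum $\mu_i\sum_j e_{i,j}$ (using $G_n(a,T)=\chi(a,T)=G_{n,a}(T)$).

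To upgrade this to a per-$r$ equality, I would invoke Lemma~\ref{lem:divide} together with the same identity $G_n(a,T)=G_{n,a}(T)$ to conclude that the partial sums of $(v_{Q_{i,r}}(I_{k,a}))_k$ dominate those of $(e_{i,k})_k$ with equality at $k=n$; equivalently, $(e_{i,j})_j$ majorizes $(v_{Q_{i,r}}(I_{j,a}))_j$. Schur-concavity of the symmetric concave function $x\mapsto\sum_j\min(e,x_j)$ then gives the per-$r$ refinement $\sum_j\min(e,e_{i,j})\le\sum_j\min(e,v_{Q_{i,r}}(I_{j,a}))$. Summing these with weights $\deg Q_{i,r}$ reproduces the displayed inequality, so any equality there propagates to equality in each per-$r$ summand at that same~$e$.

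The final step is purely combinatorial and is where I expect the main technical obstacle: knowing that two nondecreasing integer sequences $(\lambda_j)_j=(e_{i,j})_j$ and $(\nu_j)_j=(v_{Q_{i,r}}(I_{j,a}))_j$ of length~$n$ satisfy $\sum_j\min(e,\lambda_j)=\sum_j\min(e,\nu_j)$ at every $e\in\{0,\lambda_1+1,\ldots,\lambda_{n-1}+1\}\cup[\lambda_n,\infty)$ and that $\lambda$ majorizes $\nu$, one must conclude $\lambda=\nu$. I would pass to the conjugate partitions, whose partial sums are precisely the quantities $\sum_j\min(e,\cdot)$, and analyze each interval between consecutive equality points: on such an interval the conjugate of~$\lambda$ is piecewise constant except at the right endpoint, while the conjugate of~$\nu$ is nonincreasing and bounded above at the left endpoint by the corresponding value of $\lambda^t$ (from the single-step difference of the equal partial sums); combined with the vanishing of $\sum(\nu^t-\lambda^t)$ across the interval this forces pointwise equality throughout. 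Gluing the per-interval equalities yields $\nu=\lambda$ and hence $v_{Q_{i,r}}(I_{j,a})=e_{i,j}$ for all $i$, $r$ and $j$, completing the proof.
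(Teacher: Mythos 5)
Your proof is correct and follows essentially the same route as the paper's: both reduce the problem to comparing the generic and specialized kernel-dimension sequences $d_{P_i,e}$ and $d_{Q,e}$ for each irreducible divisor $Q$ of $P_i(a,T)$, both establish the pointwise inequality $d_{P_i,e}\le d_{Q,e}$ from Lemma~\ref{lem:divide}, and both combine it with the hypothesized aggregate equality at $e=e_{i,j}+1$ plus a concavity argument to force equality everywhere. Your presentation is a bit more explicit at two points the paper compresses --- you deduce pairwise coprimality of the $P_i(a,T)$'s directly from separability rather than as a by-product, and you unfold the ``equality at breakpoints plus concavity implies equality everywhere'' step via conjugate partitions and Schur-concavity, where the paper folds this into a short valuation computation and a one-line appeal to concavity --- but the underlying argument coincides.
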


\begin{proof}
The equality of dimensions is also true for $j=n$, as their sum on both sides equals~$n$ (using separability) and these dimensions can only increase by specialization.
Let $d_{P_i,e}$ be the sequence defined by Eq.~\eqref{eq:seqdim} with 
respect to the irreducible polynomial $P_i(x,T)$ and the matrix~$M(x)$. 
We define similarly for each irreducible factor $P(T)$ of $P_i(a,T)$ the 
sequence $d_{P,e}$ corresponding to the polynomial $P(T)$ and the matrix 
$M(a)$. 
We claim that it is enough to prove that $d_{P_i,e} = d_{P,e}$ for all 
$e$, $i$ and all irreducible divisors $P(T)$ of $P_i(a,T)$. Indeed,
by Eq.~\eqref{eq:diffdim}, such an equality would imply:
\begin{equation}
\label{eq:valP1}
v_{P(T)}(I_{j,a}(T)) = e_{i,j}
\end{equation}
provided that $P(T)$ is an irreducible divisor of $P_i(a,T)$, and where 
$v_{P(T)}$ denotes the $P(T)$-adic valuation. On the other hand, 
still assuming that $P(T)$ is an irreducible divisor of $P_i(a,T)$,
it follows from the definition of the $e_{i,j}$'s that:
\begin{equation}
\label{eq:valP2}
v_{P(T)}(I_j(a,T)) \geq e_{i,j}
\end{equation}
and that the equality holds if and only if $P(T)$ does not divide any
of the $P_{i'}(a,T)$ for $i' \neq i$.
Comparing characteristic polynomials, we know moreover that 
$\sum_{j=1}^n v_{P(T)}(I_{j,a}(T)) = \sum_{j=1}^n v_{P(T)}(I_j(a,T))$.
Combining this with \eqref{eq:valP1} and \eqref{eq:valP2}, we find
that the $P_i(a,T)$'s are pairwise coprime and finally get 
$I_j(a,T) = I_{j,a}(T)$ for $1 \leq j \leq n$, as wanted.

Until the end of the proof, we fix the index $i$ and reserve the letter 
$P$ to denote an irreducible divisor of $P_i(a,T)$. For a fixed integer 
$e$, denote by $j_0$ the greatest index $j$ for which 
$v_{P(T)}(I_{j,a}(T)) < e$ and observe that Eq.~\eqref{eq:dim} can be 
rewritten
$d_{P,e} = e \cdot \big(n-j_0\big) + v_{P(T)}\big(G_{j_0,a}(T)\big)$.
Using Lemma~\ref{lem:divide}, we derive
$d_{P,e} \geq e \cdot \big(n-j_0\big) + v_{P(T)}\big(G_{j_0}(a,T)\big) 
\geq d_{P_i,e}$ for all $P$ and $e$.
Eq.~\eqref{eq:diffdim} now implies that the indices $e$ for which 
$d_{P_i,e} - d_{P_i,e-1} >
d_{P_i,e+1} - d_{P_i,e}$ are exactly the $e_{i,j}$'s ($1 \leq j \leq n$).
Using concavity, we then observe that it is enough to check that 
$d_{P_i,e} = d_{P,e}$ for indices $e$ of the form $e_{i,j} + 1$. For 
those $e$, we have by assumption:
$$\begin{array}{r@{\,\,}l}
\sum_P \deg P \cdot d_{P,e} & = \dim_\ell \ker P_i(a,M(a))^e \smallskip \\
& = \dim_{k(x)} \ker P_i(x,M(x))^e \smallskip \\
& = \deg_T P_i \cdot d_{P_i,e} 
= \sum_P \deg P \cdot d_{P_i,e}
\end{array}$$
and thus $d_{P,e} = d_{P_i,e}$ for all $P$ because the inequalities
$d_{P,e} \geq d_{P_i,e}$ are already known.
\end{proof}

\subsection{A bound on bad evaluation points}
\label{ssec:boundbad}
Let $M(x)$ be a square matrix of size $n$ with coefficients in~$k[x]$. 
We set $X = x^p$ and assume that:

\noindent
(i) the entries of $M(x)$ have degree at most $pm$ (for a $m\in\mathbb{N}$), 

\noindent
(ii) $M(x)$ is similar to a matrix with coefficients in $k(X)$.

\smallskip

We are going to bound the number of values of $a$ for which the 
invariant factors of $M(x)$ do not specialize correctly at $x=a$. 
	% We mention that 
Similar discussions appear is Section~4 of Giesbrecht's 
thesis \cite{GiesbrechtPhD} in the (more complicated) case of integer 
matrices. Our treatment is nevertheless rather different in many 
places.

\medskip

\noindent
{\bf The basic bound.}
By assumption~(ii), the characteristic polynomial $\chi(x,T)$ lies 
in the subring $k[X,T]$ of $k[x,T]$.

\begin{lem}
The invariant factors $I_j(x,T)$ all belong to $k[X,T]$. Their degree 
with respect to $X$ is at most $mn$.
\end{lem}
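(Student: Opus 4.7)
The plan is to split the statement into the membership $I_j(x,T)\in k[X,T]$ and the degree bound $\deg_X I_j\leq mn$, since the former relies on a descent argument and the latter on a determinantal estimate combined with multiplicativity of the degree under products of monic-in-$T$ polynomials.

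For the membership, hypothesis (ii) supplies a matrix $N(X)$ with coefficients in $k(X)$ that is similar to $M(x)$ over $k(x)$. Since invariant factors are preserved under similarity, $I_j(x,T)$ equals the $j$-th invariant factor of $N(X)$ regarded as a matrix over $k(x)$. This in turn agrees with the $j$-th invariant factor of $N(X)$ computed over the subfield $k(X)$, because invariant factors are ratios of normalized gcds of minors of $T\mathrm{I}_n - N$, and the gcd of polynomials with coefficients in $k(X)$ does not change when the base is enlarged to $k(x)$. Hence $I_j \in k(X)[T]$. On the other hand, each $I_j$ is monic in $T$ and divides $\chi(x,T)\in k[x][T]$, so Gauss's lemma forces it to lie in $k[x][T]$. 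It then suffices to check $k(X)\cap k[x]=k[X]$, which is immediate from the fact that $k[x]$ is free of rank $p$ over $k[X]$ with basis $1,x,\ldots,x^{p-1}$; this yields $I_j \in k[X,T]$.

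For the degree bound, since the entries of $T\mathrm{I}_n - M(x)$ have $x$-degree at most $pm$, expanding $\chi=\det(T\mathrm{I}_n - M(x))$ gives $\deg_x\chi\leq npm$, and therefore $\deg_X\chi\leq nm$ because $\chi\in k[X,T]$. Factoring $\chi=I_1\cdots I_n$ in $k[X][T]$, with all factors monic in $T$, I would invoke Gauss's lemma in the form that the extension of the degree valuation on $k(X)$ to $k(X)[T]$ defined by $v(\sum_i c_i T^i)=\min_i v(c_i)$ is multiplicative. Translated to $X$-degrees, this says that $\deg_X$ is additive on products in $k[X][T]$, so $\sum_j\deg_X I_j = \deg_X\chi\leq nm$, and in particular $\deg_X I_j \leq nm$ for every $j$.

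The only delicate point is the descent step: one has to verify that the three lists of invariant factors---of $M(x)$ over $k(x)$, of $N(X)$ over $k(x)$, and of $N(X)$ over $k(X)$---are literally the same sequence of polynomials and not just related lists, which is why the minor-gcd characterization rather than the rational canonical form is the right tool. Once this is set up, everything else is bookkeeping.
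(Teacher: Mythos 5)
Your proof is correct and takes essentially the same approach as the paper's, which compresses the whole argument into three sentences; you have properly filled in the two points the paper glosses over, namely the descent of invariant factors from $k(x)$ to $k(X)$ via the minor-gcd characterization, and the additivity of $\deg_X$ on products of monic-in-$T$ factors (the paper instead bounds $\deg_x I_j \leq pmn$ and combines this with membership in $k(X)[T]$, an equivalent route).
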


\begin{proof}
By assumption~(i), $\chi(x,T)$ is a polynomial in $x$ of degree at most 
$pmn$. It then follows from Eq.~\eqref{eq:Ijchi} that the $I_j(x,T)$'s 
are polynomials in $x$ of degree at most $pmn$ as well. Now, the 
assumption~(ii) ensures that the $I_j(x,T)$'s actually lie in 
$k(X)[T]$. This completes the proof.
\end{proof}

\begin{lem}\label{lem:badpoints}
We assume that $p>n$.
There are at most $\deg_X \chi(x,T) \cdot (2n-1)$ 
points $a\in k$ such that at least one of the $P_i(a,T)$'s is not separable.
\end{lem}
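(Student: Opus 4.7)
The plan is to reduce the statement to counting roots of discriminants, one per irreducible factor of~$\chi$. By the preceding lemma $\chi(x,T)\in k[X,T]$, so factoring this monic (in~$T$) polynomial in the UFD $k[X][T]$ yields monic factors $P_i(X,T)\in k[X,T]$, irreducible over $k(X)$. Under the hypothesis $p>n$ each such $P_i$ is separable as a polynomial in~$T$ over $k(X)$: if not, then $\partial_T P_i=0$ would force $P_i\in k[X,T^p]$, making $\deg_T P_i$ a positive multiple of~$p$, contrary to $1\le\deg_T P_i\le n<p$. By the standard linear disjointness between separable and purely inseparable extensions of $k(X)$, these same $P_i$'s remain irreducible over $k(x)$, and so coincide (up to normalization) with the irreducible factors appearing in the statement.

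The remaining step is a discriminant count. For each~$i$, the discriminant $\Delta_i(X):=\mathrm{disc}_T\bigl(P_i(X,T)\bigr)$ is a nonzero element of $k[X]$ by the separability of $P_i$ in~$T$, and the Sylvester-matrix description of the resultant yields
\[
\deg_X \Delta_i \;\le\; (2\deg_T P_i - 1)\cdot\deg_X P_i.
\]
Now $P_i(a,T)$ is inseparable if and only if $\mathrm{disc}_T\bigl(P_i(a,T)\bigr)=\Delta_i(a^p)=0$, and since the Frobenius $a\mapsto a^p$ is injective on~$k$, each root of $\Delta_i$ contributes at most one bad value of~$a$. Summing over the indices,
\[
\sum_{i=1}^s \deg_X \Delta_i \;\le\; (2n-1)\sum_{i=1}^s \deg_X P_i \;\le\; (2n-1)\,\deg_X\chi,
\]
which is exactly the claimed bound.

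The only delicate point is the comparison between the factorizations of~$\chi$ over $k(X)[T]$ and over $k(x)[T]$, which is precisely where the hypothesis $p>n$ enters; once that is handled, everything else is a routine resultant degree estimate.
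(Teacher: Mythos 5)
Your proof is correct, and it is essentially the same approach as the paper's: bound the number of bad points by a degree count on discriminants, using that the Frobenius $a\mapsto a^p$ is injective so each zero of the discriminant in $X$ gives at most one bad $a\in k$. The only genuine difference is that you track the per-factor discriminants $\Delta_i=\mathrm{disc}_T P_i$ and sum their $X$-degrees, whereas the paper works with the single discriminant $D(x)=\mathrm{disc}_T\chi^{\mathrm{sep}}$ of the squarefree part $\chi^{\mathrm{sep}}=P_1\cdots P_s$; both routes give the identical bound $(2n-1)\deg_X\chi$, and your version is in fact marginally sharper since $D$ also vanishes at points where two distinct $P_i(a,T), P_j(a,T)$ merely acquire a common root without either becoming inseparable. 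You also spell out, via linear disjointness of separable and purely inseparable extensions, why the irreducible factorization of $\chi$ over $k(X)$ agrees with that over $k(x)$ and hence why each $P_i$ lies in $k[X,T]$ with a well-defined discriminant in $k[X]$; the paper uses this silently when it treats $\chi^{\mathrm{sep}}$ as an element of $k[X,T]$, so making that step explicit is a small but welcome addition.
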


\begin{proof}
We have that
$\deg_X \chi^\textrm{sep}(x,T) \leq \deg_X \chi(x,T)$ and
$\deg_T \chi^\textrm{sep}(x,T) \leq n$, since $\chi^\textrm{sep}$ divides $\chi$.
Denote by $D(x)$ the discriminant of $\chi^\textrm{sep}(x,T)$ with respect to~$T$. Its degree in $X$ is at most $\deg_X \chi(x,T)\cdot (2n-1)$,
and the assumption $p > n$ implies that $D(x)$ is not identically zero. For any $a\in k$ such that $D(a^p) \neq 0$, the polynomial $\chi^\textrm{sep}(a^p,T)$ is separable, and the same holds for the $P_i(a^p,T)$'s.
Noting that $k$ is perfect, the conclusion holds.
\end{proof}

\begin{prop}
\label{prop:boundbad}
We assume $p > n$.
Let $a_1, \ldots, a_N$ be elements in a separable closure of $k$
which are pairwise non conjugate over $k$. We assume that for each
$i \in \{1, \ldots, N\}$, there exists $j \in \{1, \ldots,n\}$
with $I_j(a_i,T) \neq I_{a_i,j}(T)$.
Then:
$$\sum_{i=1}^N \deg(a_i) \leq 4mn\cdot(n-1) + mn\cdot(2n-1)
$$
where $\deg(a_i)$ denotes the 
algebraicity degree of $a_i$ over $k$.
\end{prop}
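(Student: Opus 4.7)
I would split the bad points $a_1, \ldots, a_N$ into two groups, corresponding to which hypothesis of Proposition~\ref{prop:goodspecialization} fails, and bound each contribution separately; the two will produce the two terms in the inequality. A preliminary reduction is that since $k$ is perfect and each $a_i$ is separable over $k$, one has $k(a_i) = k(a_i^p)$, hence $\deg(a_i) = \deg(a_i^p)$. It therefore suffices, for each group, to exhibit a nonzero polynomial $\nu \in k[X]$ of appropriate $X$-degree whose zero set in the separable closure contains every $a_i^p$ in that group.

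For the first group---those $a_i$ with $\chi^{\textrm{sep}}(a_i, T)$ not separable---Lemma~\ref{lem:badpoints} directly supplies the discriminant $D(X) \in k[X]$ of $\chi^{\textrm{sep}}(X, T)$ with respect to $T$, which is nonzero and of $X$-degree at most $\deg_X \chi \cdot (2n-1) \leq mn(2n-1)$, contributing the second term of the stated inequality.

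For the second group, I would first reformulate the failure: for any $a$, the condition $I_j(a, T) \neq I_{j,a}(T)$ for some $j$ is equivalent to $G_j(a, T) \neq G_{j,a}(T)$ for some $j \in \{1, \ldots, n-1\}$, since $G_j(a, T) \mid G_{j,a}(T)$ by Lemma~\ref{lem:divide}, both sides are monic in $T$, and $I_j = G_j/G_{j-1}$ reconstructs the invariant factors index by index. Hence for each $j \in \{1, \ldots, n-1\}$, it is enough to exhibit a nonzero polynomial $\nu_j(X) \in k[X]$ of $X$-degree at most $4mn$ vanishing at $X = a^p$ whenever $G_j(a, T) \neq G_{j,a}(T)$. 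The key tool is the $k[X, T]$-module decomposition $k[x, T] = \bigoplus_{s=0}^{p-1} k[X, T] \cdot x^s$: expanding each size-$j$ minor $\mu$ of $TI_n - M(x)$ in this basis as $\mu = \sum_s \mu_s(X, T) \cdot x^s$, the divisibility $G_j \mid \mu$ in $k[x, T]$ forces $G_j \mid \mu_s$ in $k[X, T]$, and the quotients $h_{\mu, s} := \mu_s / G_j \in k[X, T]$ have trivial collective $k[X, T]$-gcd (because the $\mu_s$'s collectively have $k[X, T]$-gcd equal to $G_j$). A Bézout/resultant argument in $k(X)[T]$ applied to these $h_{\mu, s}$'s and to the appropriate $\ell$-linear combinations recovering the full $h_\mu$'s then produces $\nu_j$.

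The main obstacle is the sharp degree estimate $\deg_X \nu_j \leq 4mn$; summed over the $n-1$ values of $j$ this yields the $4mn(n-1)$ contribution. The bound relies on $\deg_X G_j \leq mn$ and $\deg_X \chi / G_j \leq mn$ (from the preceding lemma), on $\deg_T G_j, \deg_T \chi/G_j \leq n$, and on a careful choice of the polynomials to resultant, since naive choices give weaker bounds such as $2mn^2$. The factor~$4$ emerges from combining $X$-degree contributions from the $G_j$-part and the complementary $\chi/G_j$-part (each of $X$-degree at most $mn$), weighted by $T$-degree contributions via a standard resultant estimate $\deg_X \textrm{Res}_T(A, B) \leq \deg_T A \cdot \deg_X B + \deg_T B \cdot \deg_X A$. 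Combining this with the Group~(a) contribution completes the bound.
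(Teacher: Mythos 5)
Your handling of the first group (inseparable $\chi^{\textrm{sep}}(a,T)$) coincides with the paper's: Lemma~\ref{lem:badpoints} gives the $mn(2n-1)$ term. For the remaining points your route diverges from the paper's, and it has genuine gaps. The paper does \emph{not} argue through the $G_j$'s and resultants; it applies Proposition~\ref{prop:goodspecialization} pair by pair: for each $(i,e)$ with $e=e_{i,j}+1$ it considers the single polynomial $\Delta(x)\in k[X]$, the gcd of the $(n-d)$-minors of $N(x)=P_i(x,M(x))^e$, observes that a bad $a$ forces $\pi_a(x)^p\mid\Delta(x)$, and so pays only $\deg_X\Delta\le(n-1)m_{i,e}$, which is \emph{linear} in the degree data. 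Summing over $(i,e)$ and using $\sum_i e_i\deg_T P_i=n$, $\sum_i e_i\deg_X P_i\le mn$ gives $4mn(n-1)$.

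Your resultant construction cannot reach this bound, for two reasons. First, the condition you want to detect --- that the $h_\mu(a,T)=\mu(a,T)/G_j(a,T)$ acquire a common $T$-root --- is an ``existence of a common root'' condition, which a resultant encodes with a degree that is a \emph{product} (not a sum) of the input degrees. With $\deg_X h_{\mu,s}\le mn$ and $\deg_T h_{\mu,s}\le n$ this yields an $X$-degree of order $mn^2$ per $j$, hence $mn^3$ overall, a strictly worse exponent in $n$ than $4mn(n-1)$; you acknowledge the naive bound is $2mn^2$, but you never exhibit the ``careful choice'' that would remove a full factor of $n$, and I do not believe one exists within this framework. Second, the reduction to the $h_{\mu,s}(X,T)$'s is not sound: one has $h_\mu(a,T)=\sum_s a^s\,h_{\mu,s}(a^p,T)$, so the gcd condition involves $\ell$-linear combinations with coefficients $1,a,\ldots,a^{p-1}$ that depend on $a$, not merely on $a^p$. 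A B\'ezout certificate $\sum_{\mu,s}u_{\mu,s}h_{\mu,s}=R(X)$ witnesses coprimality of the \emph{pieces} $h_{\mu,s}(a^p,T)$, but that does not control coprimality of the specific combinations $h_\mu(a,T)$, so the object produced is neither a single element of $k[X]$ nor demonstrably nonzero on the good locus. Repairing this requires the fact that $G_{j,a}$ is a similarity invariant, so it depends only on $\tilde M(a)$ for the matrix $\tilde M$ over $k(X)$ conjugate to $M(x)$ --- exactly the device the paper uses --- and once you invoke that, you are essentially back to the paper's argument rather than the minor decomposition you propose.
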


\begin{proof}
We use the criteria of Proposition~\ref{prop:goodspecialization}.
We start by putting away the values of $a$ for which 
at least one of the $P_i(a,T)$'s is not separable.
By Lemma~\ref{lem:badpoints}, there are at most $mn\cdot(2n-1)$ such values.
We then have to bound from
above the values of $a$ such that the equalities:
$$\dim_{k(x)} \ker P_i(x,M(x))^e
= \dim_{\ell} \ker P_i(a,M(a))^e$$
may fail for some $i$ and some exponent 
$e=e_{i,j} + 1$ for some~$j$. 

Let us fix such a pair $(i,e)$. Set $N(x) = P_i(x,M(x))^e$ for 
simplicity. By assumption~(i), the entries of $N(x)$ have degree at
most $p m_{i,e}$ with
$m_{i,e} = e \cdot \big( m \deg_T P_i + \deg_X P_i \big)$.
On the other hand, we deduce from assumption~(ii) that the $P_i(x,T)$'s 
all lie in $k[X,T]$ 
and, as a consequence, that $N(x)$ is similar to a matrix with 
coefficients in $k(X)$. Define $d = \dim_{k(x)} \ker N(x)$. The equality 
$\dim_{\ell} \ker N(a) = d$ then fails if and only if the minors of 
$N(x)$ of size $n-d$ all vanish at $x=a$, i.e., if and only if the 
gcd $\Delta(x)$ of these minors is divisible by the minimal polynomial
of $a$ over $k$, say $\pi_a(x)$. Noting that
$\Delta(x) \in k[X]$, the latter condition is also equivalent to the
fact that $\pi_a(x)^p$ divides $\Delta(x)$ in the ring $k[X]$. This can
be possible for at most $\deg_X \Delta(x) \leq (n-d) m_{i,e} \leq 
(n-1) m_{i,e}$ values of $a$.

Therefore, if $a_1, \ldots, a_N$ are pairwise non-conjugate
``unlucky values'' of $a$, the
sum appearing in the statement of the proposition is bounded from 
above by:
\begin{align*}
\textstyle (n-1) \sum_{i,e} m_{i,e} 
& \textstyle = m(n-1) \sum_{i,e} e \deg_T P_i \\
& \hspace{1cm}\textstyle + (n-1) \sum_{i,e} e \deg_X P_i.
\end{align*}
We notice that, when $i$ remains fixed, the number of exponents of the 
form $e_{i,j}+1$ ($1 \leq j < n$) is  bounded from above by 
$e_{i,n} + 1$. The sum of these exponents is then at most:
$$\textstyle \big(\sum_{j=1}^{n-1} e_{i,j}\big) + e_{i,n} + 1 =
e_i + 1 \leq 2 e_i,$$
where $e_i$ denotes the multiplicity of the factor $P_i(x,T)$ in 
the characteristic polynomial $\chi(x,T)$. Our bound then becomes
$2m(n-1) \deg_T \chi + 2(n-1) \deg_X \chi$.
Using $\deg_T \chi = n$ and $\deg_X \chi \leq mn$ yields the 
bound.
\end{proof}

\noindent
{\bf A refinement.}
For the applications we have in mind, we shall need a refinement 
of Proposition~\ref{prop:boundbad} under the following hypothesis
depending on a parameter $\mu \in \N$:

\smallskip\centerline{$(\mathbf{H}_\mu)$: the polynomial $\chi$ has
degree at most $p\mu$ w.r.t $x$.}

\smallskip\noindent We observe that $(\mathbf{H}_\mu)$ is fulfilled when
$M(x)$ is a companion matrix whose entries are polynomials of degree
at most $p\mu$. 

\begin{prop}
\label{prop:boundbad2}
Under the assumptions of Prop.~\ref{prop:boundbad} and
the additional hypothesis $(\mathbf{H}_\mu)$, we have:
$$\sum_{i=1}^N \deg(a_i) \leq 2\mu\cdot (2n-1) + \mu\cdot (2n-1).$$
\end{prop}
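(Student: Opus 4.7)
The plan is to mimic the proof of Proposition~\ref{prop:boundbad}, taking advantage of the extra hypothesis $(\mathbf{H}_\mu)$ at each step where the coarser estimate $\deg_X \chi \leq mn$ was previously used.

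First, I would handle the non-separability contribution directly by Lemma~\ref{lem:badpoints}: combined with $(\mathbf{H}_\mu)$, it produces at most $\deg_X \chi \cdot (2n-1) \leq \mu(2n-1)$ bad points, which accounts for the second summand of the claimed bound.

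Next, for the kernel-dimension failures, I would retrace the proof of Proposition~\ref{prop:boundbad} and, for each pair $(i,e)$ with $e = e_{i,j}+1$, bound the number of bad $a$ by $\deg_X \Delta(x)$, where $\Delta \in k[X]$ is the gcd of the $(n-d)$-minors of $N(x) = P_i(x,M(x))^e$. The original estimate $\deg_X \Delta \leq (n-d) \cdot e(m \deg_T P_i + \deg_X P_i)$ carries the unwanted entry-degree factor $m$. The main technical step is to replace it by an estimate of the form $(2n-1) \cdot e \deg_X P_i$, depending only on $\deg_X P_i$. Summing over $(i,e)$ with $\sum_e e \leq 2 e_i$ as in Proposition~\ref{prop:boundbad} and then using $\sum_i e_i \deg_X P_i = \deg_X \chi \leq \mu$ would yield the first summand $2\mu(2n-1)$.

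The main obstacle is proving this refined estimate on $\deg_X \Delta$. A natural route is to exploit that $M$ is similar over $k(x)$ to its Frobenius form---the block-diagonal matrix whose blocks are the companion matrices of the invariant factors $I_j(X,T)$, whose entries lie in $k[X]$ with $X$-degrees bounded by $\deg_X I_j$. On such a representative, $P_i(x,M)^e$ becomes block-diagonal and the degree of the gcd of its minors can be controlled block by block, with the total contribution linked directly to $\deg_X P_i$ and to the multiplicities $e_{i,j}$. The delicate point is that the conjugation belongs only to $GL_n(k(x))$, not to $GL_n(k[x])$, so one must argue that the transfer back to $M$ preserves the relevant degree bound on $\Delta$, using that $\Delta \in k[X]$ is determined by the similarity class of $N$ in a controlled manner.
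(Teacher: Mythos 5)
Your overall strategy is the right one: re-run the proof of Proposition~\ref{prop:boundbad}, keep the $\mu(2n-1)$ contribution from Lemma~\ref{lem:badpoints} for non-separable specializations, and improve the bound on $\deg_X \Delta(x)$ for the kernel-dimension step. You have also correctly identified the delicate point, namely that passing to the Frobenius form only produces a matrix similar over $GL_n(k(x))$ to $M(x)$, and one must explain why a similarity-class-dependent quantity controls $\Delta$.

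However, the target bound you propose, $\deg_X \Delta \leq (2n-1)\cdot e\deg_X P_i$, is false; no argument can establish it. Take $M(x) = \mathrm{diag}(x^p, 0, \dots, 0)$, so $\chi(x,T) = (T - x^p)\,T^{\,n-1}$, which satisfies $(\mathbf{H}_\mu)$ with $\mu = 1$. For $P_i = T$ (so $\deg_X P_i = 0$) and $e = 1$ one has $N(x) = M(x)$, $d = \dim\ker N = n-1$, and the gcd of the $1\times 1$ minors is $\Delta(x) = x^p$, which has $\deg_X \Delta = 1$, whereas your bound would force $\deg_X \Delta = 0$. The correct refined estimate, which is what the paper actually proves, is
\[
\deg_X \Delta \ \leq\ \mu\cdot e\deg_T P_i \ +\ (n-d)\cdot e\deg_X P_i,
\]
that is, a $T$-degree term \emph{weighted by $\mu$} must appear. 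Its origin is that substituting $M(x)$ into $P_i(x,T)^e$ produces $X$-degree growth not only from the coefficients $a_j(x)$ of $P_i^e$ but also, through the powers $M(x)^j$, from the $X$-degrees of the entries of the Frobenius form; the latter sum to $\deg_X\chi \le \mu$, and each power of $T$ in $P_i^e$ can pick up one such factor, hence the $\mu\,\deg_T P_i^e = \mu\,e\deg_T P_i$ term. Summing, this yields $2\mu\deg_T\chi + 2(n-1)\deg_X\chi \le 2\mu n + 2(n-1)\mu = 2\mu(2n-1)$, the announced first summand. Finally, the way the paper handles the similarity issue you flagged is by bounding not $\Delta$ itself but the trace of $\bigwedge^{s} N(x)$ with $s = n-d$: this trace is similarity-invariant, it can be estimated on the Frobenius-form side via the exterior-power expansion, and $\Delta$ divides it. That mechanism is the missing idea in your sketch.
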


\begin{proof}
Let $P(x,T)$ be any bivariate polynomial with coefficients in $k$. 
Set $N(x) = P(x,M(x))$ and let $\delta(x)$ denote the gcd of the minors
of size $s$ (for some integer $s$) of $N(x)$. We claim that:
\begin{equation}
\label{eq:bounddegdelta}
\deg_x \delta(x) \leq p \mu \cdot \deg_T P + s \cdot \deg_x P
\end{equation}
To prove the claim, we consider the Frobenius normal form $\tilde M(x)$ 
of $M(x)$ and set $\tilde N(x) = P(x, \tilde N(x))$.
Observe that any minor of $\tilde M(x)$ vanishes or has 
the shape $\pm c_1(x) \cdots c_n(x)$ where $c_j(x)$ is a 
coefficient of $I_j(x,T)$ for all $j$. Noting that 
$\deg_x I_1 + \cdots + \deg_x I_n = \deg_x \chi \leq p\mu$,
we derive that \emph{all} the minors of $\tilde M(x)$ have degree at 
most $p\mu$. Now write $P(x,T) = \sum_{j=0}^{\deg_T\!P}
a_j(x) T^j$ where the $a_i(x)$'s lie in $k[x]$.
Let $\tilde f$ denote the $k[x]$-linear endomorphism of $k[x]^n$ 
attached to the matrix $\tilde M(x)$. Set $\tilde g = P(x,\tilde f)$; 
it clearly corresponds to $\tilde N(x)$. Given a vector space $E$ and $s$ linear
endomorphisms $u_1, \ldots, u_s$ of $E$, let us agree to define
$u_1 \wedge \cdots \wedge u_s$ as
$$\begin{array}{rcl}
\quad E^{\otimes s} & \to & {\textstyle \bigwedge^s} E \\
x_1 \otimes \cdots \otimes x_s & \mapsto & u_1(x_1) \wedge \cdots \wedge u_s(x_s).
\end{array}$$
where $\bigwedge^s E$ is here defined as a quotient of $E^{\otimes s}$.
Expanding the exterior product $\bigwedge^s \tilde g$, we get:
\begin{equation}
\label{eq:wedgeg}
{\textstyle \bigwedge^s \tilde g} = \sum_{i_1, \ldots, i_s = 0}^{\deg_T P}
a_{i_1}(x) \cdots a_{i_s}(x) \cdot \tilde f^{i_1} \wedge \cdots \wedge 
\tilde f^{i_s}.
\end{equation}
Moreover, assuming for simplicity that $i_1 \leq i_2 \leq \cdots
\leq i_s$ and letting $i_0 = 0$ by convention, we can write:
$$\tilde f^{i_1} \otimes \cdots \otimes \tilde f^{i_s} =
\bigcirc_{j=0}^s 
\big[
({\textstyle \bigotimes^j \text{id}}) \otimes
({\textstyle \bigotimes^{s-j} \tilde f})^{i_j-i_{j-1}} \big],$$
where $\bigcirc$ denotes the composition of the above 
(pairwise commuting) maps. 
We get that the entries of the matrix (in the canonical
basis) of $\tilde f^{i_1} \wedge 
\cdots \wedge \tilde f^{i_s}$ all have degree at most $p \mu \cdot i_s$. 
The same argument demonstrates that the degrees of the 
entries of the above matrix are not greater than:
$$p\mu \cdot \max (i_1, \ldots, i_s) \leq p \mu \cdot \deg_T P$$
when we no longer assume that the $i_j$'s are sorted by nondecreasing 
order. Therefore, back to Eq.~\eqref{eq:wedgeg}, we find that the 
entries of $\bigwedge^s \tilde N(x)$ have degree at most $p \mu \cdot 
\deg_T P + s \cdot \deg_x P$. It is then also the case of its trace, 
which is the same as the trace of $\bigwedge^s 
N(x)$ since $N(x)$ and $\tilde N(x)$ are similar. This finally implies 
the claimed inequality~\eqref{eq:bounddegdelta} because $\delta(x)$
has to divide this trace.

The Proposition now follows by inserting the above input
in the proof of Proposition~\ref{prop:boundbad}.
\end{proof}

\subsection{Algorithms}

We keep the matrix $M(x)$ satisfying the assumptions (i) and (ii)
of \S\ref{ssec:boundbad}.
From now on, we assume that the only access we have to the matrix $M(x)$ 
passes through a black box {\tt invariant\_factors\_at${}_{M(x)}$} that 
takes as input an element~$a$ lying in a finite extension $\ell$ of $k$ 
and outputs instantly the invariant factors $I_{j,a}(T)$ of the matrix 
$M(a)$. Our aim is to compute the invariant factors of $M(x)$.
We will propose two possible approaches: the first one is deterministic
but rather slow although the second one is faster but probabilistic and
comes up with a Monte-Carlo algorithm which may sometimes output wrong
answers.

Throughout this section, the letter $D$ refers to \emph{a priori} 
upper bound on the $X$-degree of the characteristic polynomial of $M(x)$. 
One can of course always take $D = mn$ but better bounds might be 
available in particular cases. Similarly we reserve the letter $F$
for an upper bound on the sum of degrees of ``unlucky evaluation
points''. Proposition~\ref{prop:boundbad} tells us that $mn(6n-5)$ 
is always an acceptable value for~$F$. Remember however that this value can 
be lowered to $3 \mu (2n-1)$ under the hypothesis $(\mathbf{H}_\mu)$
thanks to Proposition~\ref{prop:boundbad2}. We will always assume
that $F \geq D$.

For simplicity of exposition, we assume from now on that $k = \F_q$ is 
a finite field of cardinality $q$ (it is more difficult and the case of
most interest for us).

\medskip

\noindent
{\bf Deterministic.}
The discussion of \S \ref{ssec:boundbad} suggests the following algorithm 
whose correctness follows directly from the definition of $F$ together 
with the assumption $F \geq D$.

\noindent\hrulefill

\noindent {\bf Algorithm} {\tt invariant\_factors\_deterministic}

\noindent{\bf Input:} $M(x)$ satisfying (i) and (ii), $D$, $F$
with $F \geq D$

\noindent{\bf Output:} The invariant factors of $M(x)$

\smallskip\noindent 1.\ %
Construct an extension $\ell$ of $\F_q$ of degree $F+1$

\noindent \hphantom{1.\ }%
and pick an element $a \in \ell$ such that $\ell = \F_q[a]$

\noindent \hphantom{1.\ }%
{\sc Cost:} $\softO(F)$ operations in $\F_q$

\smallskip\noindent 2.\ 
$I_{1,a}(T), \ldots, I_{n,a}(T) = \texttt{invariant\_factors\_at}{}_{M(x)}(a)$

\smallskip\noindent 3.\ 
{\bf for} $j=1,\ldots,n$

\smallskip\noindent 4.\ \hspace{3mm}%
Find $I_j(x,T)$ of degree $\leq D$ s.t. $I_j(a,T) = I_{j,a}(T)$

\smallskip\noindent 5.\ 
{\bf return} $I_1(x,T), \ldots, I_n(x,T)$

\vspace{-1ex}\noindent\hrulefill

\begin{prop}
\label{prop:costdeterministic}
The algorithm above requires only one call to the black 
box \texttt{invariant\_factors\_at}${}_{M(x)}$ with an input of
degree exactly $F+1$.
\end{prop}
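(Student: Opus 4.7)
The plan is to read off both assertions directly from the pseudocode and use the bounds from \S\ref{ssec:boundbad} to justify that the element chosen in step~1 is a suitable evaluation point.

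First, the count of black-box calls is by inspection: step~2 is the unique line of the algorithm that invokes \texttt{invariant\_factors\_at}${}_{M(x)}$, while steps~3--4 only perform interpolation inside $\ell$. Second, the claim that the input has algebraicity degree exactly $F+1$ over $\F_q$ is built into step~1, which constructs $\ell$ as an extension of $\F_q$ of degree $F+1$ and selects a primitive element $a$, so that $\deg_{\F_q}(a) = F+1$ by definition.

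For the proposition to carry content one should also check that the pseudocode actually produces the invariant factors of $M(x)$. Here I would invoke Proposition~\ref{prop:boundbad}: the sum of algebraicity degrees of the ``unlucky'' points (those $b$ for which $I_j(b,T) \neq I_{j,b}(T)$ for some $j$) is at most $F$, so since $\deg(a) = F+1 > F$ the chosen $a$ cannot be unlucky. Consequently the outputs returned by the black box satisfy $I_{j,a}(T) = I_j(a,T)$ for every $j$, which is precisely the information that step~4 interpolates.

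For the interpolation itself to determine each $I_j(x,T)$ uniquely, I would combine hypothesis~(ii), which places $I_j(x,T)$ inside $\F_q[X,T]$ with $X$-degree at most $D$, with the observation that the Frobenius $x \mapsto x^p$ is bijective on the perfect field $\F_q$, so $\deg_{\F_q}(a^p) = \deg_{\F_q}(a) = F+1 \geq D+1$. Therefore a polynomial in $\F_q[X]$ of degree at most $D$ is uniquely determined by its value at $X = a^p$, and step~4 recovers each $I_j(x,T)$ without ambiguity. The only genuinely delicate point is this last observation, which relies on exploiting the $X$-structure of the invariant factors; everything else in the proof is bookkeeping.
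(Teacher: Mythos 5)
Your proposal is correct and follows essentially the same approach the paper intends. The paper does not display a proof of this proposition; it merely states, just before the pseudocode, that ``correctness follows directly from the definition of $F$ together with the assumption $F \geq D$,'' and the proposition itself is then read off by inspection of steps~1 and~2. Your write-up makes that remark precise in exactly the two ways the authors have in mind: (a) the degree bound in Proposition~\ref{prop:boundbad} shows an element of degree $F+1$ over $\F_q$ cannot be unlucky, so the black box returns the correct specialized invariant factors; (b) since $\F_q$ is perfect, the Frobenius is an automorphism of any finite extension, so $\deg_{\F_q}(a^p) = \deg_{\F_q}(a) = F+1 > D$ and the value at $X=a^p$ determines a polynomial in $\F_q[X]$ of degree at most $D$ uniquely. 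One small attribution slip: it is not hypothesis~(ii) alone that places $I_j(x,T)$ in $\F_q[X,T]$ with $X$-degree at most $D$, but the unlabeled lemma of \S\ref{ssec:boundbad}, which uses both (i) and (ii); this does not affect the substance of the argument.
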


\noindent
{\bf Probabilistic.}
We now present a Monte-Carlo algorithm:

\noindent\hrulefill

\noindent {\bf Algorithm} {\tt invariant\_factors\_montecarlo}

\noindent{\bf Input:} $M(x)$ s.t. (i) and (ii), $\varepsilon
\in (0,1)$,
$D$, $F$ with $F \geq D$

\noindent{\bf Output:} The invariant factors of $M(x)$

\smallskip\noindent 1.\ %
Find the smallest integer $s$ such that:
\begin{equation}
\label{eq:boundeps}
2 \cdot \frac{(D{+}s{+}1)^2}{s (q^s - 2F)} + 
\frac 1 2 \cdot \Big(\frac {4F}{q^s} \Big)^{\!(D{-}2)/s} \:\leq \:\varepsilon
\end{equation}

\noindent \hphantom{1.\ }%
and set $K = \lceil \frac{3D}s \rceil$ and 
$k = \lceil \frac{D+1}s \rceil$.

\medskip\noindent 2.\ 
{\bf for} $i=1,\ldots,K$

\smallskip\noindent 3.\ \hspace{3mm}%
pick at random $a_i \in \F_{q^s}$ s.t. $\F_{q^s} = \F_q[a_i]$

\noindent \hphantom{3.\ \hspace{3mm}}%
{\sc Cost:} $\softO(s)$ operations in $\F_q$

\smallskip\noindent 4.\ \hspace{3mm}%
$I_{1,i}(T), \ldots, I_{n,i}(T) = \texttt{invariant\_factors\_at}{}_{M(x)}(a_i)$

\medskip\noindent 5.\ 
{\bf for} $j=1,\ldots,n$

\smallskip\noindent 6.\ \hspace{3mm}%
$d_j = \max_i \deg(I_{1,i} (T) \cdot I_{2,i} (T) \cdots I_{j,i}(T))$

\smallskip\noindent 7.\ \hspace{3mm}%
select $I \subset \{1, \ldots, K\}$ of cardinality $k$ s.t.

\noindent \hphantom{7.\ \hspace{3mm}}%
(i)~$\deg(I_{1,i}(T)\cdot I_{2,i}(T) \cdots I_{j,i}(T)) = d_j$ for all $i \in I$

\noindent \hphantom{7.\ \hspace{3mm}}%
(ii)~the $a_i$ are pairwise non conjugate for $i \in I$

\noindent \hphantom{7.\ \hspace{3mm}}%
{\sc Remark:} if such $I$ does not exist, raise an error

\smallskip\noindent 8.\ \hspace{3mm}% 
compute $I_j \in \F_q[X,T]$ of $X$-degree $\leq D$ s.t. 

\noindent \hphantom{8.\ \hspace{3mm}}%
$I_j(a_i,T) = I_{j,i}(T)$ for all $i \in I$

\noindent \hphantom{7.\ \hspace{3mm}}%
{\sc Cost:} $\softO(D)$ operations in $\F_q$

\medskip\noindent 9.\ 
{\bf return} $I_1(x,T), \ldots, I_n(x,T)$

\vspace{-1ex}\noindent\hrulefill

\smallskip

\begin{prop}
\label{prop:costmontecarlo}
We have $s \in O(\log \frac{FD}\varepsilon)$. Moreover:

\smallskip

\noindent $\bullet$
\emph{Correctness:}
Algorithm {\tt invariant\_factors\_montecarlo}
fails or returns a wrong answer with probability at most 
$\varepsilon$.

\smallskip

\noindent $\bullet$
\emph{Complexity:}
It performs
$\lceil \frac{3D}s \rceil$ calls to the black box
with inputs of degree $s$ and
$\softO(n(D + \log \frac F \varepsilon))$
operations in $\F_q$.
\end{prop}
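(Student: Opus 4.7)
I would prove the proposition in three pieces: the bound on $s$, correctness, and complexity.

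\textbf{Bound on $s$.} I would bound each summand of~\eqref{eq:boundeps} separately. The first is of order $D^2/(s(q^s-2F))$ (using $(D+s+1)^2=O(D^2)$ for $D\gtrsim s$), so it falls below $\varepsilon/2$ once $q^s=\Omega(F+D^2/(s\varepsilon))$, i.e.~$s\log q=\Omega(\log(FD/\varepsilon))$. The second summand, $(4F/q^s)^{(D-2)/s}/2$, after taking logarithms becomes $(D-2)(s\log q-\log(4F))/s\gtrsim\log(1/\varepsilon)$, which (using $D\gtrsim s$) translates to $s\log q=\Omega(\log(F/\varepsilon))$. Combining the two yields $s\in O(\log(FD/\varepsilon))$.

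\textbf{Correctness.} I would define two failure events on the random samples $a_1,\ldots,a_K$: event $\mathrm{A}$ that two of them are $\F_q$-conjugate, and event $\mathrm{B}$ that the number of \emph{unlucky} samples (those for which $I_j(a_i,T)\neq I_{j,a_i}(T)$ for some $j$) is large enough to prevent step~7 from selecting a lucky configuration. By Proposition~\ref{prop:boundbad} the unlucky elements in a separable closure of $\F_q$ have total degree at most $F$, so at most $F$ primitive elements of $\F_{q^s}$ are unlucky; a standard count of primitive elements via Möbius inversion gives $\Pr(a_i\text{ unlucky})\le 2F/(q^s-2F)$. The first summand of~\eqref{eq:boundeps} arises as a bound for $\Pr(\mathrm{A})$ from the union bound over the $\binom{K}{2}$ pairs combined with the fact that each Frobenius orbit of a primitive element has size exactly $s$, while the second summand arises from a binomial tail inequality bounding $\Pr(\mathrm{B})$, with the binomial coefficient $\binom{K}{K-k+1}$ absorbed into the base of the exponential (using $K=\lceil 3D/s\rceil$ and $k=\lceil(D+1)/s\rceil$).

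Conditional on $\overline{\mathrm{A}}\cap\overline{\mathrm{B}}$, I would verify correctness as follows. At least $k$ pairwise non-conjugate lucky samples are available. By Lemma~\ref{lem:divide} and the monicity of the $G_j$'s, $\deg G_{j,a_i}\geq\deg_T G_j(x,T)$ with equality iff $G_{j,a_i}(T)=G_j(a_i,T)$; event $\overline{\mathrm{B}}$ is arranged to ensure that the maximum $d_j$ computed in step~6 equals $\deg_T G_j(x,T)$ for every $j$, so that the set $I$ chosen in step~7 consists of samples with $I_{j,a_i}(T)=I_j(a_i,T)$. Step~8 then reduces to polynomial interpolation with $k\cdot s\geq D+1$ $\F_q$-independent evaluation constraints on a polynomial of $X$-degree $\leq D$, and thus uniquely recovers $I_j(x,T)$.

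\textbf{Complexity.} I would sum the annotated costs in the algorithm: the $K=\lceil 3D/s\rceil$ black-box calls on inputs of degree $s$; $\softO(Ks)=\softO(D)$ operations in $\F_q$ for sampling the $a_i$'s and finding $s$; $\softO(nD)$ operations for the $n$ interpolations in step~8; and an $\softO(s)=\softO(\log(F/\varepsilon))$ overhead per arithmetic operation in $\F_{q^s}$. This totals $\softO(n(D+\log(F/\varepsilon)))$ operations in $\F_q$ outside of the black-box calls, as claimed. The subtlest point is calibrating event $\mathrm{B}$ precisely so that its probability is bounded by the exact form $(4F/q^s)^{(D-2)/s}/2$; this requires careful accounting of the binomial tail and a handling of the degenerate case in which a small number of unlucky samples could coherently inflate some $d_j$ above $\deg_T G_j(x,T)$.
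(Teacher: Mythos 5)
Your overall plan matches the paper's: bound the unlucky conjugacy classes of $\F_{q^s}$ via Proposition~\ref{prop:boundbad}, then analyze the $K$ random draws. But both the probabilistic decomposition and the conditional-correctness step differ from the paper, and the second one has a real gap.

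On the decomposition: the paper does not split into your events $\mathrm{A}$ (some pair of the $K$ draws is conjugate) and $\mathrm{B}$. It instead writes a single combinatorial expression for the probability that at least $k$ of the $K$ draws land in pairwise distinct \emph{good} conjugacy classes, and lower-bounds it by $\bigl(1-\frac kG\bigr)^k$ times one minus a binomial tail, where $G$ is the number of good classes. The first summand of Eq.~\eqref{eq:boundeps} then comes from $1-\bigl(1-\frac kG\bigr)^k\le k^2/G$ together with $k\le (D+s+1)/s$ and $G\ge (q^s-2F)/(2s)$ — not from a union bound over all $\binom{K}{2}$ pairs. A union bound would give a constant close to $9/2$ rather than $2$, so it does not reproduce Eq.~\eqref{eq:boundeps} as written; you would need to restate the inequality in step~1 and re-derive $s=O(\log\frac{FD}\varepsilon)$ for your version. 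Also note that $2F/(q^s-2F)$ is the ratio $B/G$ of bad to good classes, not the probability that a single draw is unlucky (which is $B/C\le 2F/q^s$).

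The genuine gap is the conditional-correctness step. Lemma~\ref{lem:divide} gives $\deg G_{j,a_i}(T)\ge \deg_T G_j(x,T)$, so an unlucky draw can only \emph{inflate} the degree. Taking the maximum over $i$ in step~6 therefore yields $d_j>\deg_T G_j(x,T)$ as soon as a single unlucky $a_i$ has an inflated $j$-th determinantal divisor; your event $\overline{\mathrm B}$ (``at least $k$ lucky draws'') does not rule that out. Your statement that $\overline{\mathrm B}$ is ``arranged to ensure that the maximum $d_j$ equals $\deg_T G_j(x,T)$'' thus conflates ``at least $k$ lucky draws'' with the much stronger ``no draw inflates any $d_j$''. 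You flag exactly this in your closing sentence but leave it unresolved, so the argument as written does not establish that the set $I$ in step~7 consists of lucky indices, which is what the interpolation in step~8 needs. The paper's proof asserts that $\ge k$ draws in pairwise distinct good classes suffice; to make that rigorous you would either have to show that such draws force the step~7 selection to be lucky despite the $\max$, or observe that the degree comparison should really pick out the minimal value $d_j=\deg_T G_j(x,T)$, which is the one all lucky draws attain.
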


\begin{proof}
The first assertion is left to the reader.
Let $\mathcal A$ be the set of elements $a$ of $\F_{q^s}$ such that 
$\F_q[a] = \F_{q^s}$. It is an easy exercise to prove that $\mathcal A$ 
has at least $\frac{q^s} 2$ elements (the bound is not sharp). Let 
$\mathcal C_1, \ldots, \mathcal C_C$ be the conjugacy classes (under the Galois action) in
$\mathcal A$. Remark that each $\mathcal C_i$ has by definition $s$ elements,
so that $C \geq \frac{q^s}{2s}$. We say that a conjugacy class is
\emph{bad} if it contains one element $a$ for which $I_j(a,T) \neq 
I_{a,j}(T)$ for some $j$. Otherwise, we say that it is \emph{good}.
Let $B$ (resp. $G$) be the number of bad (resp. good) classes. We 
have $B+G = C$ and $B \leq \frac F s$ by definition of $F$.

The algorithm {\tt invariant\_factors\_montecarlo} succeeds if
there exist at least $k$ indices $i$ for which the corresponding 
$a_i$'s lie in pairwise distinct good classes. This happens with
probability at least:
$$\frac 1{C^K} \cdot 
\sum_{j=k}^K {\textstyle \binom{K}{j}} \cdot
 G (G-1) \cdots (G-k+1) \cdot G^{j-k} \cdot B^{K-j}.$$ 
(The above formula gives the probability that the \emph{first} $k$ good 
classes are pairwise distinct, which is actually stronger than what we 
need.) The above quantity is at least equal to
\begin{align*}
\left(1 - \frac k G\right)^{\!k}\: \cdot
\left(1 \: - \:
\sum_{j=0}^{k-1} {\textstyle \binom{K}{j}} \cdot \Big(\frac G C\Big)^j 
\cdot \Big(\frac B C\Big)^{K-j}\right).
\end{align*}
Moreover for $j \leq k-1$, we have:
\begin{align*}
\Big(\frac G C\Big)^j
\cdot \Big(\frac B C\Big)^{K-j}
& \leq \Big(\frac {BG}{C^2}\Big)^j \!\cdot\! \Big(\frac B C\Big)^{K-2j}
\leq \frac 1{2^{2j}} \cdot \Big(\frac {2F}{q^s} \Big)^{\!K{-}2j}\\
& \leq \frac 1{2^K} \cdot \Big(\frac {4F}{q^s} \Big)^{\!K{-}2j}
\leq \frac 1{2^K} \cdot \Big(\frac {4F}{q^s} \Big)^{\!(D{-}2)/s}.
\end{align*}
Therefore the probability of success is at least:
$$\left(1 - \frac k G\right)^{\!k}\: \!\cdot\!
\left(1 \: - \:
\frac 1 2 \cdot \Big(\frac {4F}{q^s} \Big)^{\!(D{-}2)/s}\right).$$
Using $k \leq \frac{D+s+1}s$ and $G \geq 
\frac{q^2-2F}{2s}$, we find that the probability of failure is at most
the LHS of Eq.~\eqref{eq:boundeps}. The correctness is proved.
As for the complexity, the results are obvious.
\end{proof}

\section{Computing invariant factors \\ of the p-curvature}\label{sec:pcurv}

Throughout this section, we fix a finite field $k = \F_q$ of cardinality 
$q$ and characteristic $p$.  
We endow the field of rational functions $k(x)$ with the natural derivation $f 
\mapsto f'$.

\subsection{The case of differential modules}

We recall that a differential module over $k(x)$ is $k(x)$-vector space
$M$ endowed with an additive map $\partial : M \to M$ satisfying the
following Leibniz rule:
$$\forall f \in k(x), \, \forall m \in M, \quad 
\partial (fm) = f' \cdot m + f \cdot \partial(m).$$
The \emph{$p$-curvature} of a differential module $M$ is the mapping
$\partial^p = \partial \circ  \cdots \circ \partial$ ($p$ times).
Using the fact that the $p$-th derivative of any $f \in k(x)$ vanishes, 
we derive from the Leibniz relation above that $\partial^p$ is 
$k(x)$-linear endomorphism of $M$.
It follows moreover from~\cite[Remark~4.5]{BoCaSc15} that $\partial^p$ is 
defined over $k(x^p)$, in the sense that there exists a $k(x)$-basis of 
$M$ in which the matrix of $\partial^p$ has coefficients in $k(x^p)$. In 
particular, all the invariant factors of the $p$-curvature have their coefficients in $k(x^p)$.

\medskip

\noindent
{\bf Statement of the main Theorem.}
From now on, we fix a differential module $(M, \partial)$. We assume that 
$M$ is finite dimensional over $k(x)$ and let $r$ denote its dimension. 
We pick $(e_1, \ldots, e_r)$ a basis of $M$ and let $A$ denote the 
matrix of $\partial$ with respect to this basis. We write $A = \frac 
1{f_A} \tilde A$ where $f_A$ and the entries of $\tilde A$ all lie
in $k[x]$. Let $d$ be an upper bound on the degrees of all these
polynomials.
The aim of this section is to design fast deterministic and probabilistic
algorithms for computing the invariant factors of the $p$-curvature of $(M, \partial)$.
The following Theorem summarizes our results.

\begin{theo}
\label{theo:pcurvmodules}
We assume $p > r$.

\smallskip

\noindent
1. There exists a deterministic algorithm that computes the
invariant factors of the $p$-curvature of $(M, \partial)$ within
$$\softO\big(d^{\hspace{0.2mm}\omega+\frac 3 2} r^{\omega+2} \sqrt{p}\big)$$
operations in $k = \F_q$.

\smallskip

\noindent
2. Let $\varepsilon \in (0,1)$.
There exists a Monte-Carlo algorithm computing the
invariant factors of the $p$-curvature of $(M, \partial)$~in
$$\softO\big(d^{\hspace{0.2mm}\omega + \frac 1 2} r^{\omega} \cdot(dr - \log \varepsilon) \cdot \sqrt{p}\big)$$
operations in $k = \F_q$.
This algorithm returns a wrong answer with probability at most
$\varepsilon$.
\end{theo}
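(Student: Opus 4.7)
The plan is to build, and then feed into the algorithms of
\S\ref{sec:invfact}, a black box that, on input~$a$ in a finite
extension $\ell$ of~$\F_q$ with $f_A(a)\neq 0$, returns the invariant
factors of the evaluation $A_p(a)$. Following the blueprint laid out in
the introduction, the black box is built from the two ingredients
of~\cite{BoCaSc15}: the matrix $Y_a\in\mathrm{M}_r(\ell)$, equal to the
$p$-th Hurwitz-series coefficient of the fundamental solution of
$Y'=A(x)Y$ at the ordinary point $x=a$, is similar to $A_p(a)$, and
$Y_a$ can be expressed as a matrix factorial whose factors have size
controlled by~$r$ and~$d$ and whose entries depend polynomially on the
iteration index. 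A Bostan--Gaudry--Schost baby-step/giant-step
evaluation of this matrix factorial produces $Y_a$ in
$\softO(d^{\omega+1/2}r^\omega\sqrt p)$ operations in~$\ell$; the
invariant factors of $Y_a$ --- which coincide with those of $A_p(a)$,
since similar matrices share the same Frobenius form --- are then
obtained in $O(r^\omega\log r\log\log r)$ further operations in~$\ell$
via Storjohann's deterministic algorithm~\cite{Storjohann01}.

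The next step is to certify the parameters $m$, $D$, $F$ needed to
invoke the results of~\S\ref{ssec:boundbad}. A direct induction on the
recurrence $A_{i+1}=A_i'-A\cdot A_i$ shows that $f_A^p\cdot A_p$ is a
polynomial matrix of $x$-degree in $O(pd)$, which realises
assumption~(i) with $m\in O(d)$; assumption~(ii) is exactly the
stability statement over $k(x^p)$ recalled at the start of
\S\ref{sec:pcurv}. With $n=r$ this already forces $D\in O(dr)$. In
order to obtain the sharper estimate on~$F$ that the final complexity
requires, I would pass through a cyclic-vector companion presentation
of~$(M,\partial)$: this is a polynomial matrix whose characteristic
polynomial has $x$-degree $O(pdr)$, so hypothesis $(\mathbf{H}_\mu)$ is
met with $\mu\in O(dr)$, and Proposition~\ref{prop:boundbad2} then
yields an acceptable value $F\in O(dr^2)$.

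With these quantities settled, both complexity bounds follow by
plugging the black-box cost into the cost analyses of
Propositions~\ref{prop:costdeterministic}
and~\ref{prop:costmontecarlo}. Algorithm
\texttt{invariant\_factors\_deterministic} makes a single call at a
point of algebraicity degree $F+1=O(dr^2)$, where one operation in the
extension costs $\softO(dr^2)$ operations in~$\F_q$; combining this
with the per-call cost gives the deterministic bound
$\softO(d^{\omega+3/2}r^{\omega+2}\sqrt p)$. Algorithm
\texttt{invariant\_factors\_montecarlo} makes $K=O(dr/s)$ calls at
points of algebraicity degree $s=O(\log(dr/\varepsilon))$, and a
symmetric accounting reproduces the Monte-Carlo bound. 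The principal
obstacle I anticipate is not the arithmetic counting but the careful
passage from $A_p$ to a polynomial representation to which
Proposition~\ref{prop:boundbad2} can be applied with $\mu$ linear
rather than quadratic in~$r$: without this refinement, the deterministic
complexity degrades by an extra factor of~$r$, and the cyclic-vector
construction that seems necessary must be coordinated with the
$k(x^p)$-rationality of the $p$-curvature in order to preserve
assumption~(ii).
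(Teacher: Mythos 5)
Your overall plan is exactly the paper's: build a black box via the Hurwitz-series fundamental solution and a Bostan--Gaudry--Schost matrix factorial (Proposition~\ref{prop:invfactlocal} gives precisely the per-call cost $\softO(d^\omega r^\omega\sqrt{dp})=\softO(d^{\omega+1/2}r^\omega\sqrt p)$ that you quote), then run the evaluation-interpolation algorithms of \S\ref{sec:invfact} with $D=O(dr)$ and $F=O(dr^2)$, and account for the extension-degree overhead. The arithmetic bookkeeping you give matches the paper's.

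The one place you depart from the paper is the passage to a cyclic-vector companion presentation in order to invoke $(\mathbf{H}_\mu)$ and Proposition~\ref{prop:boundbad2}, which you present as necessary to avoid losing a factor of $r$. This concern is mistaken. With $m=d$ (since $f_A^i A_i$ is polynomial of degree at most $id$, so assumption~(i) holds with $m=d$) and $n=r$, the basic bound of Proposition~\ref{prop:boundbad} already yields $F\le 4mn(n-1)+mn(2n-1)=O(dr^2)$, exactly the order you obtain with your refinement; the paper indeed uses $F=6dr(r-1)$ there and never appeals to $(\mathbf{H}_\mu)$ in the module case. Note that invoking $(\mathbf{H}_\mu)$ via a generic $\mu=mn$ gives $F=3\mu(2n-1)=O(mn^2)$, which is identical to the basic bound; the refinement only pays off when one can take $\mu\ll mn$, as happens for genuine companion matrices where $\mu=d+r$ (this is exactly how the paper proves the sharper Theorem~\ref{theo:pcurvoperators} for the scalar operator case, via Proposition~\ref{prop:Hdoperators}). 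The cyclic-vector detour is therefore not only unnecessary but also potentially problematic, since changing the $k(x)$-basis can inflate $x$-degrees and since it must be verified that the new presentation still satisfies both assumptions~(i) and~(ii); the paper avoids this entirely for the module case and obtains the same $F$.
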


In what follows, we will use the notation $A_p(x)$ for the matrix of the 
$p$-curvature of $(M, \partial)$ with respect to the distinguished basis 
$(e_1, \ldots, e_r)$. Given an element $a$ lying in a 
finite extension $\ell$ of $k$, we denote by $A_p(a) \in {M}_r(\ell)$ 
the matrix deduced from $A_p$ by evaluating it at $x=a$.

\medskip

\noindent
{\bf The similarity class of $A_p(a)$.}
Let $S$ be an irreducible polynomial over $k$. Set $\ell = k[u]/S$ and 
let $a$ denote the image of the variable $u$ in $\ell$. We assume that 
$S$ does not divide $f_A$, i.e., $f_A(a) \neq 0$.
The first ingredient we need is the construction of an auxiliary matrix 
which is similar to $A_p(a)$. This construction comes from our previous 
paper~\cite{BoCaSc15}. Let us recall it briefly. We define the ring 
$\Sdp$ of Hurwitz series whose elements are formal infinite sums of the shape:
\begin{equation}
\label{eq:eltSdp}
a_0 + a_1 \gamma_1(t) + a_2 \gamma_2(t) + \cdots + a_n \gamma_n(t)
+ \cdots
\end{equation}
and on which the addition is straightforward and the multiplication
is governed by the rule $\gamma_i(t) \cdot \gamma_j(t) = \binom{i+j}i
\gamma_{i+j}(t)$. (The symbol $\gamma_i(t)$ should be thought of as
$\frac{t^i}{i!}$.) 
We moreover endow $\Sdp$ with the derivation defined by $\gamma_i(t)' = 
\gamma_{i-1}(t)$ (with the convention that $\gamma_0(t) = 1$) and the 
projection map $\text{pr} : \Sdp \to \ell$ sending the series given by 
Eq.~\eqref{eq:eltSdp} to its constant coefficient $a_0$. We shall often 
use the alternative notation
$f(0)$ for $\text{pr}(f)$. 
If $f \in \Sdp$ is given by the series~\eqref{eq:eltSdp}, we then 
have $a_n = f^{(n)}(0)$ for all nonnegative integers~$n$.
We have a homomorphism of rings:
\vspace{-1.5ex}
$$
\psi_S : k[x][{\textstyle \frac{1}{f_A}}] \to \Sdp, \quad
f(x) \mapsto 
\sum_{i=0}^{p-1} f^{(i)}(a) \gamma_i(t).$$

\vspace{-1.5ex}
It is easily checked that $\psi_S$ commutes with the derivation. We 
can then consider the differential module over $\Sdp$ obtained from 
$(M,\partial)$ by scalar extension. By definition, it corresponds to the 
differential system $Y' = \psi_S(A) \cdot Y$.

The benefit of working over $\Sdp$ is the existence of an analogue of 
the well-known Cauchy--Lipschitz Theorem~\cite[Proposition~3.4]{BoCaSc15}.
This notably implies the existence of a fundamental matrix of solutions, 
i.e., an $r\times r$ matrix $Y_S$ with entries in $\Sdp$, and satisfying:
\begin{equation}
\label{eq:defYS}
Y'_S = \psi_S(A) \cdot Y_S
\quad \text{and} \quad
Y_S(0) = \text{I}_r
\end{equation}
with $\text{I}_r$ the identity matrix of size $r$. Moreover, as 
explained in more details later, the construction of $Y_S$ is 
effective.

For any integer $n \geq 0$, we let $Y_S^{(n)}$ denote the matrix 
obtained from $Y_S$ by taking the $n$-th derivative entry-wise. The
next proposition is a consequence of \cite[Proposition~4.4]{BoCaSc15}.

\begin{prop} 
\label{prop:similar}
The matrices $A_p(a)$ and $-Y_S^{(p)}(0)$ are similar over $\ell$.
\end{prop}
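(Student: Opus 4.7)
The plan is to compute the matrix of the $p$-curvature of $(M,\partial)$ in two different ways on the scalar extension $M\otimes_{k(x)}\Sdp$ and identify the two expressions. First, since $\psi_S: k[x][\tfrac{1}{f_A}]\to\Sdp$ is a ring homomorphism commuting with the derivation, the recursion $A_1=-A$, $A_{i+1}=A_i'-AA_i$ propagates entrywise: the matrix of $\partial^p$ in the canonical basis $(e_j)$ over $\Sdp$ is $\psi_S(A_p)$. Applying the projection $\text{pr}$ sending $t\mapsto 0$ recovers $A_p(a)$, since $\text{pr}\circ\psi_S$ is evaluation at $x=a$ and $f_A(a)\neq 0$ ensures $A_p$ has no pole at $a$.

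Next, I would exploit the relation $Y_S'=\psi_S(A)\,Y_S$ to introduce a ``flat'' basis: writing $v_j$ for the $j$-th column of $Y_S$, the equation forces $\partial_A(v_j)=v_j'-\psi_S(A)\,v_j=0$, hence $\partial_A^p(v_j)=0$. I would then express $e_j=\sum_i (Y_S^{-1})_{ij}\,v_i$ and expand $\partial_A^p(e_j)$ by iterated Leibniz: $\partial_A^p(f\,v)=\sum_{k=0}^p \binom{p}{k} f^{(k)}\,\partial_A^{p-k}(v)$. In characteristic $p$ the intermediate binomials $\binom{p}{k}$ with $0<k<p$ vanish, and since $\partial_A(v_i)=0$ only the $k=p$ term survives, giving $\partial_A^p(f\,v_i)=f^{(p)}\,v_i$. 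Re-expressing the result in the basis $(e_j)$ via $v_i=\sum_l (Y_S)_{li}\,e_l$ yields the matrix identity $\psi_S(A_p) = Y_S\cdot (Y_S^{-1})^{(p)}$ in $M_r(\Sdp)$.

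To put the right-hand side in the claimed form, I would differentiate the identity $Y_S\,Y_S^{-1}=\text{I}_r$ exactly $p$ times: the same characteristic-$p$ Leibniz collapse forces $0=(Y_S\,Y_S^{-1})^{(p)} = Y_S\,(Y_S^{-1})^{(p)} + Y_S^{(p)}\,Y_S^{-1}$, so that $Y_S\,(Y_S^{-1})^{(p)} = -Y_S^{(p)}\,Y_S^{-1}$. Combining with the previous identity gives $\psi_S(A_p) = -Y_S^{(p)}\,Y_S^{-1}$; specializing at $t=0$ via $\text{pr}$ and using $Y_S(0)=\text{I}_r$ (whence $Y_S^{-1}(0)=\text{I}_r$ as well) produces $A_p(a) = Y_S(0)\cdot(-Y_S^{(p)}(0))\cdot Y_S^{-1}(0)$, exhibiting the desired similarity with conjugating matrix $Y_S(0)$.

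The principal subtlety is that $\partial_A^p$ is $k(x)$-linear but not $\Sdp$-linear on the extended module, so the ``change of basis'' from $(e_j)$ to the flat basis $(v_j)$ cannot be effected by plain matrix conjugation; the identity must instead be derived through the explicit Leibniz expansion, where the characteristic-$p$ cancellation of intermediate binomial coefficients is precisely what yields a tractable closed form for $\psi_S(A_p)$ in terms of $Y_S$.
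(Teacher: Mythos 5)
Your argument is correct, and it differs from the paper's treatment in an informative way: the paper dispatches this proposition with a one-line citation to Proposition 4.4 of BoCaSc15, whereas you reconstruct a direct proof via the Wronskian-type identity $\psi_S(A_p) = Y_S\cdot(Y_S^{-1})^{(p)} = -Y_S^{(p)}Y_S^{-1}$, with the characteristic-$p$ collapse of the Leibniz binomials doing the work in both factorization steps. This is a clean and self-contained route, and you are right to flag the key subtlety that $\partial^p$ is \emph{not} $\Sdp$-linear (since the $p$-th derivative does not vanish on Hurwitz series, unlike on $k(x)$), so the relation $\psi_S(A_p)=Y_S(Y_S^{-1})^{(p)}$ must be obtained by evaluating $\partial^p$ on the constant basis $(e_j)$ rather than by an $\Sdp$-linear change of basis. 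Two small remarks. First, you should note explicitly that $Y_S$ is invertible in $M_r(\Sdp)$; this follows because $\Sdp$ is a local ring with residue field $\ell$ via $\mathrm{pr}$, and $Y_S(0)=\mathrm{I}_r$ is invertible. Second, your proof actually delivers more than the statement: with the normalization $Y_S(0)=\mathrm{I}_r$ in force, you obtain the equality $A_p(a)=-Y_S^{(p)}(0)$, not merely similarity; the paper's weaker wording ``similar'' presumably mirrors the formulation of the cited result (which may be stated for an arbitrary invertible initial condition), but your stronger conclusion is correct under the stated normalization.
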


\noindent
{\bf Fast computation of $Y_S^{(p)}(0)$.}
We recall that $Y_S$ is defined as the solution of the 
system~\eqref{eq:defYS}. Remembering that we have written $A
= \frac 1{f_A} \tilde A$, we obtain the relation:
\begin{equation}
\label{eq:diffYS}
\psi_S(f_A) \cdot Y_S' = \psi_S(\tilde A) \cdot Y_S.
\end{equation}
Write $f_A = \sum_{i=0}^d f_i \cdot (x-a)^i$ and
$\tilde A = \sum_{i=0}^d \tilde A_i \cdot (x-a)^i$
where the $f_i$'s lie in $\ell$ and the $A_i$'s are square matrices
of size $r$ with entries in $\ell$. Remark that $f_0$ does not vanish
because it is equal to $f_A(a)$. Note moreover that the $f_i$'s can be 
computed for a cost of $\softO(d)$ operations in $k$ using 
divide-and-conquer techniques. Given a fixed pair of indices $(i',j')$, 
the same discussion applies to the collection of the $(i',j')$-entries 
of the $A_i$'s. The total cost for computing the decompositions of $f_A$ 
and $\tilde A$ is then $\softO(dr^2)$.
Now, coming back to the definitions, we find that 
$\psi_S(f_A) = 
\sum_{i=0}^d i! \: f_i \cdot \gamma_i(t)$ and 
$\psi_S(\tilde A) = \sum_{i=0}^d i! \: \tilde A_i \cdot \gamma_i(t)$.
% Therefore 
Eq.~\eqref{eq:diffYS} yields the recurrence:
\begin{equation}
\label{eq:recYSn}
Y_S^{(n+1)}(0) = \sum_{i=0}^{\min(n,d)} B_i(n) \cdot Y_S^{(n-i)}(0)
\end{equation}
where the $B_i \in M_r(\ell[u])$ are  defined by:
\begin{equation}
\label{eq:defBi}
f_0 B_i = u (u{-}1) \cdots (u{-}i{+}1) \cdot
\big( \tilde A_i - (u{-}i) f_{i+1} \cdot \text{I}_r \big)
\end{equation}
with the convention that $f_{d+1} = 0$. Now setting:
$$Z_n = \left(
\begin{matrix}
Y_S^{(n-d)}(0) \\
Y_S^{(n-d+1)}(0) \\
\vdots \\
Y_S^{(n)}(0) 
\end{matrix}\right), \quad
B = \left(
\begin{matrix}
& \text{I}_r \\
& & \text{I}_r \\
& & & \ddots \\
& & & & \text{I}_r \\
B_d & \cdots & \cdots & \cdots & B_0
\end{matrix} \right)$$
(with the convention $Y_s^{(i)}(0) = 0$ when $i < 0$), 
the recurrence~\eqref{eq:recYSn} becomes $Z_{n+1} = B(n) 
\cdot Z_n$. Hence, we obtain $Z_p = B(p-1) \cdot B(p-2) \cdots 
B(0) \cdot Z_0$ from what we finally get that $Y_S^{(p)}(0)$ is
the $(r \times r)$-matrix located at the bottom right corner of
$B(p-1) \cdot B(p-2) \cdots B(0)$.
The computation of the former matrix factorial can be performed 
efficiently using a variation of the Chudnovskys' 
algorithm~\cite{ChCh88,BoGaSc07}. Combining this with 
Proposition~\ref{prop:similar}, we end up with the following.
	% proposition.

\begin{prop}
\label{prop:invfactlocal}
The invariant factors of $A_p(a)$ can be computed in 
$\softO(d^\omega r^\omega \sqrt{dp})$
operations in the field $\ell$.
\end{prop}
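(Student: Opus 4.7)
The plan is to reduce the computation to a single matrix factorial of size $N = (d{+}1)r$ over $\ell[u]$, extract the bottom-right $r \times r$ block of the resulting product, and finish by invoking a small-matrix invariant-factors routine through Proposition~\ref{prop:similar}.

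First, I would assemble the block companion matrix $B(u) \in M_N(\ell[u])$ of~\eqref{eq:defBi}, using the Taylor expansions at $x = a$ of $f_A$ and of the entries of $\tilde A$. These expansions cost $\softO(dr^2)$ operations in $\ell$ by divide-and-conquer, and inspecting~\eqref{eq:defBi} shows that $B(u)$ has degree $D = O(d)$ in~$u$.

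Next, I would compute the matrix factorial $P := B(p{-}1) \cdot B(p{-}2) \cdots B(0)$ by the Chudnovsky--Chudnovsky baby-step/giant-step scheme in its matrix form~\cite{BoGaSc07}. Writing $p = st$, this forms the shifted product $Q(u) := B(u{+}s{-}1) \cdots B(u{+}1) \cdot B(u)$ by a subproduct tree (cost $\softO(N^\omega s d)$), evaluates $Q$ at the $t$ regularly spaced points $0, s, 2s, \ldots, (t{-}1)s$ via multipoint evaluation (cost $\softO(N^2 \max(sd, t))$), and then chain-multiplies the $t$ resulting $N \times N$ constant matrices (cost $\softO(t N^\omega)$). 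Choosing $s \approx \sqrt{p/d}$ and $t \approx \sqrt{pd}$ balances all three contributions to a total of $\softO(N^\omega \sqrt{pd}) = \softO(d^\omega r^\omega \sqrt{dp})$ operations in $\ell$.

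The bottom-right $r \times r$ block of $P$ is the matrix $Y_S^{(p)}(0)$. Proposition~\ref{prop:similar} identifies its invariant factors, up to the elementary sign change induced by negation of the matrix, with those of $A_p(a)$; so a final call to Storjohann's deterministic Frobenius-form algorithm~\cite{Storjohann01} applied to this $r \times r$ matrix over $\ell$ completes the computation in $O(r^\omega \log r \log\log r)$ further operations, which the previous bound absorbs.

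I expect the main obstacle to be the fine-grained complexity analysis of the matrix factorial: the matrix $B(u)$ has degree $D = O(d)$ in $u$, not the constant degree traditionally assumed in the Chudnovsky scheme. One must carefully balance the baby-step length against $d$ and verify that the subproduct-tree and multipoint-evaluation phases telescope without incurring an extra $d$ factor, in order to obtain the exponent $\sqrt{dp}$ rather than $\sqrt{p}$ alone in the final bound.
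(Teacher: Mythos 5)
Your proposal matches the paper's proof essentially step for step: assemble the block companion matrix $B(u)$ of size $(d{+}1)r$ with entries of degree at most $d$, compute the matrix factorial $B(p{-}1)\cdots B(0)$ by the Chudnovsky--Chudnovsky baby-step/giant-step technique in $\softO(d^\omega r^\omega \sqrt{dp})$ operations, extract the bottom-right $r\times r$ block, and apply Storjohann's Frobenius-form algorithm together with Proposition~\ref{prop:similar}. The only cosmetic difference is that where you unfold the baby-step/giant-step complexity analysis (correctly balancing $s\approx\sqrt{p/d}$ against $t\approx\sqrt{pd}$), the paper simply cites \cite[Theorem~2]{BoClSa05} for that bound.
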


\begin{proof}
Note that $B$ is a square matrix of size $(d{+}1)r$. Moreover coming
back to~\eqref{eq:defBi}, we observe that the entries of~$B$ all have
degree at most $d$. By \cite[Theorem~2]{BoClSa05} the matrix factorial
$- B(p-1) \cdot B(p-2) \cdots B(0)$ can then be computed for the cost
of $\softO(d^\omega r^\omega \sqrt{dp})$ operations in $\ell$. By
\cite{Storjohann01}, the invariant factors of its submatrix
$-Y_S^{(p)}(0)$
can be obtained for an extra 
cost of $\softO(r^\omega)$ operations in $\ell$ (which is negligible 
compared to the previous one). Using Proposition~\ref{prop:similar}
these invariant factors are also those of $A_p(a)$, and we are
done.
\end{proof}

\noindent
{\bf Conclusion.}
Proposition~\ref{prop:invfactlocal} yields an acceptable primitive 
\texttt{invariant\_factors\_at}${}_{A_p(x)}$. Plugging it in the
algorithm \texttt{invariant\_factors\_deterministic} and using the
parameters $D = dr$ and $F = 6dr(r-1)$, we end up with an algorithm
that computes the invariant factors of $A_p(x)$ for the cost of one
unique call to \texttt{invariant\_factors\_at}${}_{A_p(x)}$ with an 
input lying in an extension $\ell/k$ of degree $F+1$ (cf. Proposition~\ref{prop:costdeterministic}).
By Proposition~\ref{prop:invfactlocal}, we find that 
the total complexity of the obtained algorithm is
$\softO\big(d^{\hspace{0.2mm}\omega+\frac 3 2} r^{\omega+2} \sqrt{p}\big)$
operations in $\F_q$.
The first part of Theorem~\ref{theo:pcurvmodules} is then established.
The second part is obtained in a similar fashion using 
the algorithm \texttt{invariant\_factors\_montecarlo} together with 
Proposition~\ref{prop:costmontecarlo} for correctness and complexity 
results.

\subsection{The case of differential operators}

The ring of differential operators $k(x)\!\left<\partial\right>$ is the 
ring of usual polynomials over $k(x)$ in the variable $\partial$ except 
that the multiplication is ruled by the relation
$\partial \cdot f = f \cdot \partial + f'$.
We define similarly the ring $k[x]\!\left<\partial\right>$. We say
that $L \in k[x]\!\left<\partial\right>$ has bidegree $(d,r)$
if it has degree $d$ with respect to $x$ and degree $r$ with respect
to $\partial$.

If $L$ is a differential operator in $k(x)\!\left<\partial\right>$, 
one easily checks that the set 
$k(x)\!\left<\partial\right> L$ of left multiples of $L$ is a left 
ideal of $k(x)\!\left<\partial\right>$. The quotient 
$M_L = k(x)\!\left<\partial\right>/k(x)\!\left<\partial\right>L$
is then a vector space over $k(x)$. It is moreover endowed with a map
$\partial : M_L \to M_L$ given by the left multiplication by $\partial$.
This map turns $M_L$ into a differential module.

We shall prove in this section that the complexities announced in 
Theorem~\ref{theo:pcurvmodules} can be improved in the case of 
differential modules coming from differential operators. Below is 
the statement of our precise result.

\begin{theo}
\label{theo:pcurvoperators}
Let $L \in k[x]\!\left<\partial\right>$ be a differential operator
of bidegree $(d,r)$. We assume $p > r$.

\smallskip

\noindent
1. There exists a deterministic algorithm that computes the
invariant factors of the $p$-curvature of $M_L$ within
$$\softO\big((d+r)^{\omega+1} d^{\frac 1 2} r \cdot \sqrt{p}\big)$$
operations in $k = \F_q$.

\smallskip

\noindent
2. Let $\varepsilon \in (0,1)$.
There exists a Monte-Carlo algorithm that computes the
invariant factors of the $p$-curvature of $M_L$ in
$$\softO\big((d+r)^\omega d^{\frac 1 2} \cdot (d - \log \varepsilon) \cdot \sqrt{p}\big)$$
operations in $k = \F_q$.
This algorithm returns a wrong answer with probability at most
$\varepsilon$.
\end{theo}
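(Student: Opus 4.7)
The plan is to follow the same three-step architecture used for Theorem~\ref{theo:pcurvmodules}, with two refinements specific to the operator case. First, I associate to~$L$ the differential module $M_L$ in the distinguished basis $(1,\partial,\ldots,\partial^{r-1})$, so that the matrix $A$ of~$\partial$ is the companion matrix of $L/a_r$: writing $L = \sum_{i=0}^{r} a_i \partial^i$, one has $f_A = a_r$ and $\tilde A$ is the companion-shaped integer matrix having $a_r$ on the subdiagonal and $-a_0,\ldots,-a_{r-1}$ in the last column, whose entries all have $x$-degree at most~$d$. Proposition~\ref{prop:similar} then applies verbatim, reducing the computation of the invariant factors of $A_p(a)$ to that of $Y_S^{(p)}(0)$.

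The first refinement is an accelerated local algorithm. Because $A$ is a companion matrix, the system $Y'_S = \psi_S(A) Y_S$ is equivalent to the scalar equation $\psi_S(L)\cdot y = 0$ on the first coordinate, and each column of $Y_S$ consists of a scalar Hurwitz solution together with its first $r-1$ derivatives. Expanding $L$ as a series in $x-a$ yields a $P$-recursive equation of order $O(d+r)$ on the Hurwitz coefficients, whose transition matrix $M(n)$ has size $O(d+r)$ with entries polynomial in $n$ of degree $O(d)$. Plugging this compressed system into the Bostan--Chudnovsky--Gaudry--Schost matrix-factorial algorithm~\cite{BoClSa05,BoGaSc07} produces $M(p+r-1)\cdots M(0)$, and hence $Y_S^{(p)}(0)$, in $\softO((d+r)^\omega \sqrt{dp})$ operations in $\ell$; applying Storjohann's Frobenius-form algorithm~\cite{Storjohann01} then yields the invariant factors of $A_p(a)$ within the same cost.

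The second refinement is a sharper bound on unlucky evaluation points. Using the commutation of $\partial^p$ with $\partial$ in the cyclic module $M_L$, one obtains $A_p = R(A)$ for some $R \in k(x^p)[T]$ of $T$-degree less than $r$ and $X$-degree $O(d)$. A direct analysis then gives $\deg_x \chi(x,T) \leq p\mu$ with $\mu = O(d)$, so that hypothesis $(\mathbf{H}_\mu)$ is fulfilled and Proposition~\ref{prop:boundbad2} supplies $D = O(d)$ and $F = O(dr)$. Plugging the local primitive into \texttt{invariant\_factors\_deterministic} with these parameters requires a single call in an extension of degree $F+1 = O(dr)$, so arithmetic in that extension inflates the local cost by a factor $F$, giving $\softO((d+r)^\omega d^{3/2} r \sqrt{p})$, which is majorized by the claimed $\softO((d+r)^{\omega+1} d^{1/2} r \sqrt{p})$. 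Using \texttt{invariant\_factors\_montecarlo} and Proposition~\ref{prop:costmontecarlo} instead yields $\softO\bigl((D + \log\tfrac{1}{\varepsilon})\cdot(d+r)^\omega d^{1/2} \sqrt{p}\bigr) = \softO\bigl((d - \log\varepsilon)(d+r)^\omega d^{1/2} \sqrt{p}\bigr)$.

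The main obstacle is the first refinement: the naive $(d{+}1)r \times (d{+}1)r$ matrix factorial one gets by applying Proposition~\ref{prop:invfactlocal} directly would give the suboptimal local cost $\softO((dr)^\omega \sqrt{dp})$. Removing the extra factor $r^{\omega-1}$ requires a genuine use of the cyclicity of $M_L$ to compress the recurrence~\eqref{eq:recYSn} onto a state space of size $O(d+r)$ instead of $O(dr)$; this is what allows the final complexities to scale with $(d+r)^\omega$ rather than $(dr)^\omega$.
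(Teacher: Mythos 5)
Your first refinement correctly reproduces the paper's key idea for the local step: exploit the cyclicity of $M_L$ to replace the $(d{+}1)r\times(d{+}1)r$ companion-lift recurrence of Proposition~\ref{prop:invfactlocal} by a scalar recurrence with state space of size $d+r$ and polynomial coefficients of degree at most $d$. The paper implements this concretely by multiplying $L$ on the right by $\partial^d$, rewriting $L\,\partial^d = \sum_{i=0}^{d+r} b_i(\theta)\,\partial^i$ in terms of the Euler operator $\theta = t\partial$ (with $\deg b_i \le d$), and then passing to the shifted Hurwitz series $z_j$ satisfying $\partial^d z_j = y_j$; your sketch omits the $\partial^d$-shift but the resulting matrix size and degree agree, so the cost $\softO((d+r)^\omega\sqrt{dp})$ in $\ell$ comes out the same.

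The second refinement is where your argument has a genuine gap. You assert that, because $\partial^p$ commutes with $\partial$, the matrix $A_p$ of the $p$-curvature equals $R(A)$ for some polynomial $R\in k(x^p)[T]$ in the companion matrix $A$ of $\partial$. This is false in general: $\partial$ is a connection, not a $k(x)$-linear endomorphism, so the operator identity $[\partial,\partial^p]=0$ translates into the first-order matrix differential relation $A_p' = AA_p - A_pA$ (up to sign), not into an algebraic commutation between the matrices $A$ and $A_p$. Already $A_2 = -A' + A^2$ fails to be a polynomial in $A$ unless $A'$ and $A$ commute. Consequently your derivation of hypothesis $(\mathbf{H}_\mu)$ with $\mu = O(d)$ from this nonexistent identity is unsupported. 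The paper sidesteps this entirely: Proposition~\ref{prop:Hdoperators} establishes $(\mathbf{H}_{r+d})$ by invoking Lemma~3.9 and Theorem~3.11 of \cite{BoCaSc14}, which directly bound $\deg_x\chi$ for the $p$-curvature of a cyclic operator. You need to replace your $A_p=R(A)$ step by an appeal to that external bound (or an independent proof of the degree bound on $\chi$); with $\mu=r+d$ in hand, the conclusion via Propositions~\ref{prop:boundbad2}, \ref{prop:costdeterministic} and~\ref{prop:costmontecarlo} then goes through as you describe and yields the stated complexities.
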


\noindent
{\bf Better bounds.}
From now on, we fix a differential operator $L \in k(x)\!\left<
\partial\right>$ of bidegree $(d,r)$. We denote by $A_p(x)$ the
matrix of the $p$-curvature of $M_L$ with respect to the canonical
basis $(1, \partial, \ldots, \partial^{r-1})$. If $a_r(x)$ is the
leading coefficient of $L$ (with respect to~$\partial$),
it follows from~\cite[Proposition~3.2]{Cluzeau03} 
that $A_p(x)$ has the form 
$A_p(x) = \frac 1{a_r(x)^p} \cdot \tilde A_p(x)$
where $\tilde A_p(x)$ is a matrix with polynomial entries of degree
at most $pd$. 

\begin{prop}
\label{prop:Hdoperators}
The matrix $\tilde A_p(x)$ satisfies the hypothesis $(\textbf{H}_{r+d})$
(introduced just before Proposition~\ref{prop:boundbad2}).
\end{prop}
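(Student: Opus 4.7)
The plan is to bound $\deg_x \chi(x,T)$, where $\chi := \det(T I_r - \tilde A_p(x))$, by $p(r+d)$, going beyond the crude bound $prd$ that follows immediately from the entrywise estimate $\deg_x \tilde A_p \le pd$. Writing $\chi_{A_p}(x,T) = T^r + \sum_{j=0}^{r-1} c_j(x) T^j$ and using $\tilde A_p = a_r^p A_p$, one obtains $\chi(x,T) = T^r + \sum_j \tilde c_j(x) T^j$ with $\tilde c_j := a_r^{p(r-j)} c_j$. The structural input recalled at the beginning of Section~3.1 is that each $c_j$ lies in $k(x^p)$; combined with $a_r^p \in k[x^p]$ (Frobenius identity in characteristic~$p$), this forces $\tilde c_j \in k[x] \cap k(x^p) = k[x^p]$. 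Writing $\tilde c_j(x) = \Gamma_j(x^p)$ with $\Gamma_j \in k[X]$, the statement reduces to $\deg_X \Gamma_j \le r+d$ for every $j$.

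The first step is a column-wise refinement of the Cluzeau bound. An induction on the recurrence that reduces $\partial^{n+1}$ modulo $L$, starting from $\partial^r \equiv -\sum (a_i/a_r)\partial^i$, shows that the reduction of $\partial^{p+j-1}$ modulo $L$ has denominator dividing $a_r^{p+j-r}$ with polynomial numerators of $x$-degree at most $(p+j-r)d$. Multiplying by $a_r^p$, the $j$-th column of $\tilde A_p$ is divisible by $a_r^{r-j}$, giving a factorization $\tilde A_p = \hat A \cdot \mathrm{diag}(a_r^{r-1}, \ldots, a_r^{0})$ in which $\deg_x \hat A_{ij} \le (p+j-r)d$. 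This is the operator-case analogue of the companion-matrix observation preceding Proposition~\ref{prop:boundbad2}: in the principal-minor expansion $\tilde c_j = (-1)^{r-j}\sum_{|J|=r-j}\bigl(\prod_{\ell\in J} a_r^{r-\ell}\bigr)\det \hat A[J,J]$, the explicit powers of $a_r$ account for the $d$ contribution to $r+d$ while the $\hat A$-minors account for the $r$ contribution.

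The final step is to upgrade this column structure into the bound $\deg_X \Gamma_j \le r+d$. Each individual minor above has $x$-degree exactly $(r-j)pd$, so the improvement to $p(r+d)$ must come entirely from cancellations forced by $\tilde c_j \in k[x^p]$. My plan is to make these cancellations explicit via the cyclic presentation $M_L \cong k(x)\langle\partial\rangle/(L)$: since $A_p = \partial^p$ commutes with $\partial$ and $1 \in M_L$ is a cyclic vector for $\partial$, $\chi_{A_p}(x,T)$ admits a resultant-type description in terms of $L$ and $\partial^p - T$, whose $x$-degree, after the $a_r^{pr}$ normalization, is controlled directly by the bidegree $(d,r)$ of~$L$. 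The main obstacle is precisely this last identification: the cancellations from $(r-j)pd$ down to $p(r+d)$ are invisible minor by minor and emerge only when the column factorization, the $k(x^p)$-rationality, and the cyclic presentation of $M_L$ are exploited in concert, so the substantive work is carrying out the resultant identification and bookkeeping the degrees precisely.
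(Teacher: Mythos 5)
There is a genuine gap, and you have in fact flagged it yourself: the proposal sets up the problem but stops exactly where the real work begins.

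Your preliminary reductions are sound. Writing $\chi_{A_p} = T^r + \sum_j c_j(x) T^j$, the facts that $c_j \in k(x^p)$, that $a_r^p \in k[x^p]$ by the Frobenius, and that $k[x] \cap k(x^p) = k[x^p]$, do give $\tilde c_j = a_r^{p(r-j)} c_j \in k[x^p]$, so it is legitimate to write $\tilde c_j = \Gamma_j(x^p)$ and aim for $\deg_X \Gamma_j \le r+d$. The column refinement of the Cluzeau bound is also correct: the reduction of $\partial^{p+j-1}$ modulo $L$ has denominator $a_r^{p+j-r}$ and numerator degree $\le (p+j-r)d$, giving $\tilde A_p = \hat A \cdot \mathrm{diag}(a_r^{r-1},\dots,a_r^0)$ with $\deg_x \hat A_{i,j} \le (p+j-r)d$. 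But as you compute, this buys nothing: in the principal-minor expansion of $\tilde c_j$, each term has $x$-degree bounded by $\sum_{\ell\in J}\bigl[(p+\ell-r)d + (r-\ell)d\bigr] = (r-j)pd$, exactly the crude entrywise bound. The improvement from $(r-j)pd$ down to $p(r+d)$ therefore has to come entirely from the cancellations you allude to, and the resultant-type identification that is supposed to exhibit them is precisely the step you do not carry out. As written, the proposal establishes no degree bound better than the trivial one; it is a plan, not a proof.

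For what it is worth, the paper's own proof of this proposition is a one-line citation to Lemma~3.9 and Theorem~3.11 of \cite{BoCaSc14}. Those results establish the $p(r+d)$ degree bound by an entirely different mechanism: they rewrite $L$ in terms of the Euler operator $\theta = x\partial$, present the action of $\partial^p$ via a $\theta$-filtered recurrence, and read the degree bound off that structure (this is the same framework the present paper alludes to at the end of Section~3.2 when it notes that its matrix factorial is the specialization at $\theta=0$ of the one in \cite{BoCaSc14}). Your proposed route via a noncommutative resultant involving $L$ and $\partial^p - T$ is conceptually plausible, since $\partial^p$ is central and $M_L$ is cyclic, but you would need to make precise which resultant you mean, prove the identification with $\chi_{A_p}$, and track the $x$-degree through it; none of that is obvious, and the noncommutativity makes the degree bookkeeping the crux rather than a formality. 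Until that identification is carried out, the proposal leaves the proposition unproved.
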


\begin{proof}
This is a direct consequence of Lemma~3.9 and Theorem~3.11 of
\cite{BoCaSc14}.
\end{proof}

\noindent
{\bf The similarity class of $A_p(a)$.}
We now revisit Proposition~\ref{prop:invfactlocal} when 
the differential module comes from the differential operator $L$. We fix 
an irreducible polynomial $S \in k[x]$ and assume that $S$ is coprime
with the leading coefficient $a_r(x)$ of $L$. We set $\ell = k[x]/S$ and 
let $a$ denote the image of $x$ is $\ell$.
We define $t = x - a \in \ell[x]$ and consider the ring of differential
operators $\ell[x]\!\left<\partial\right>$. The latter acts on $\Sdp$
by letting $\partial$ act as the derivation.
Let $Y_S$ be the fundamental system of solutions of the equation
$Y_S' = \psi_S(A) \cdot Y_S$ where $A$ is the companion matrix which
gives the action of $\partial$ on~$M_L$. It takes the form:
$$Y_S = \left(
\begin{matrix}
y_0 & y_1 & \cdots & y_{r-1} \\
y'_0 & y'_1 & \cdots & y'_{r-1} \\
\vdots & \vdots & & \vdots \\
y_0^{(r-1)} & y_1^{(r-1)} & \cdots & y_{r-1}^{(r-1)} 
\end{matrix}\right)$$
where $y_j \in \Sdp$ is the unique solution of the differential equation 
$L y_j = 0$ with initial conditions $y_j^{(n)}(0) = \delta_{j,n}$ (where
$\delta_{\cdot,\cdot}$ is the Kronecker symbol) for $0 \leq n < r$. 

We introduce the Euler operator $\theta = t \cdot \partial \in 
\ell[x]\!\left<\partial\right>$. 
Using the techniques of~\cite[Section~4.1]{BoCaSc14}, one can write 
$L \cdot \partial^d  = \sum_{i=0}^{d+r} b_i(\theta) \partial^i$
within $\softO((r+d)d)$ operations in $\ell$. Here the $b_i$'s are
polynomials with coefficients in $\ell$ of degree at most $d$. One
can check moreover that the polynomial $b_{d+r}$ is constant equal
to $a_r(a)$; in particular, it does not vanish thanks to our
assumption on $S$. 
For all $j$, define $z_j = \sum_{n=0}^\infty y_j^{(n)}(0)
\gamma_{n+d}(t)$. Clearly
$\partial^d z_j = y_j$, so that we have
$\left(\sum_{i=0}^{d+r} b_i(\theta) \partial^i\right) \cdot z_i = 0$
for all $i$. Noting that $\theta$ acts on $\gamma_n(t)$ by multiplication
by $n$, we get the recurrence relation:
\vspace{-1ex}
$$\forall n \geq 0, \quad 
\sum_{i=0}^{d+r} b_i(n) \cdot y_j^{(n+i-d)}(0) = 0$$
with the convention that $y_j^{(n)} = 0$ when $n < 0$. Letting:
\begin{align*}
Z_n & = \left(
\begin{matrix}
y_0^{(n-d)}(0) & \cdots & y_{r-1}^{(n-d)}(0) \smallskip \\
y_0^{(n-d+1)}(0) & \cdots & y_{r-1}^{(n-d+1)}(0) \\
\vdots & & \vdots \\
Y_0^{(n+r-1)}(0) & \cdots & y_{r-1}^{(n+r-1)}(0)
\end{matrix}\right) \in {M}_{d+r,r}(\ell) \medskip \\
\text{and} \quad B & = 
\frac {-1} {a_r(a)} \cdot \left(
\begin{matrix}
& 1 \\
& & \ddots \\
& & & 1 \\
b_0 & b_1 & \cdots & b_{d+r-1}
\end{matrix} \right) \in {M}_{d+r,d+r}(\ell)
\end{align*}
the above recurrence rewrites $Z_{n+1} = B(n) Z_n$. Solving the
recurrence, we get $Z_p = B(p-1) \cdots B(0) \cdot Z_0$, and
we derive that $Y_S^{(p)}(0)$ is the $(r\times r)$ matrix located
at the bottom right corner of $B(p-1) \cdot B(p-2) \cdots B(0)$.
Using Proposition~\ref{prop:similar} and \cite[Theorem~2]{BoClSa05},
we end up with the following proposition (compare with
Proposition~\ref{prop:invfactlocal}).

\begin{prop}
\label{prop:invfactlocal2}
The invariant factors of $A_p(a)$ can be computed in 
$\softO((d+r)^\omega \sqrt{dp})$ operations in the field $\ell$.
\end{prop}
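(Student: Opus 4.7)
The plan is to mirror the proof of Proposition~\ref{prop:invfactlocal}, but to exploit the smaller matrix $B$ constructed just above. The key quantitative improvement is that here $B$ is a square matrix of size $d+r$ (instead of $(d{+}1)r$ in the general module case), while its entries are still polynomials of degree at most $d$. This follows from the discussion preceding the statement: the coefficients $b_0,\ldots,b_{d+r-1}$ produced by the rewriting $L\cdot\partial^d=\sum_{i=0}^{d+r}b_i(\theta)\partial^i$ have $\theta$-degree at most $d$, and $b_{d+r}=a_r(a)\neq 0$ ensures that the leading factor $-1/a_r(a)$ is well-defined, so $B\in{M}_{d+r}(\ell[u])$ is well-formed with $\deg_u B\leq d$.

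First I would compute the matrix $B$ itself. The coefficients $b_i(\theta)$ are produced in $\softO((r+d)d)$ operations in $\ell$ by the techniques of \cite[\S4.1]{BoCaSc14} already invoked above, which is negligible compared to the target bound. Next, I would feed $B$ to the matrix-factorial algorithm of \cite[Theorem~2]{BoClSa05}, which evaluates $B(p-1)\cdot B(p-2)\cdots B(0)$ in $\softO(N^\omega\sqrt{dp})$ operations in $\ell$ for a matrix of size $N$ with entries of degree at most $d$. With $N=d+r$ this gives $\softO((d+r)^\omega\sqrt{dp})$ operations, exactly the announced bound.

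Then I would read off the $(r\times r)$ bottom-right submatrix of this product, which by construction equals $Y_S^{(p)}(0)$. By Proposition~\ref{prop:similar}, $-Y_S^{(p)}(0)$ is similar over $\ell$ to $A_p(a)$, so the invariant factors of $A_p(a)$ coincide with those of $-Y_S^{(p)}(0)$. Finally, I would call Storjohann's algorithm \cite{Storjohann01} on this $r\times r$ matrix; its cost $\softO(r^\omega)$ is absorbed into the previous estimate, and outputs the desired invariant factors.

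The only point requiring care is that all the size/degree bookkeeping matches: the size of $B$ is really $d+r$ (not larger, since the rewriting uses the Euler operator $\theta$ rather than $\partial$ to absorb one factor of degree in $x$), the entries have $u$-degree bounded by $d$, and the leading coefficient $a_r(a)$ is invertible in $\ell$ because $S$ was chosen coprime to $a_r(x)$. Once these are confirmed, the rest is a direct plug-in of two existing subroutines, so no new technical obstacle arises.
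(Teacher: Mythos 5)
Your proposal is correct and takes essentially the same route as the paper: compute $B\in M_{d+r}(\ell[u])$ with $\deg_u B\leq d$ via the Euler-operator rewriting, run the matrix-factorial algorithm of \cite[Theorem~2]{BoClSa05} to get $B(p-1)\cdots B(0)$ in $\softO((d+r)^\omega\sqrt{dp})$ operations, extract the bottom-right $r\times r$ block $Y_S^{(p)}(0)$, and conclude by Proposition~\ref{prop:similar} together with Storjohann's Frobenius-form algorithm. The size and degree bookkeeping you flag (size $d+r$, $u$-degree $\leq d$, invertibility of $a_r(a)$) all check out exactly as in the paper.
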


\noindent
{\bf Conclusion.}
The final discussion is now similar to the one we had in the case
of differential modules.
Proposition~\ref{prop:invfactlocal2} provides the primitive 
\texttt{invariant\_factors\_at}${}_{A_p(x)}$. Using it in the
algorithms \texttt{invariant\_factors\_deterministic} and 
\texttt{invariant\_factors\_montecarlo} with the parameters $D = d$ 
and $F = 3d(2r-1)$ 
(coming from the combination of Propositions~\ref{prop:boundbad2} 
and~\ref{prop:Hdoperators}), we respectively end up with
deterministic and Monte-Carlo algorithms whose complexities agree
with the ones announced in Theorem~\ref{theo:pcurvoperators}.

It is instructive to compare the methods and results of this 
section with those of our previous paper~\cite{BoCaSc14}. We
remark that the matrix factorial considered above
is nothing but the specialization at $\theta = 0$ of the matrix
factorial in~\cite{BoCaSc14}.
Although the theoretical approaches of the two papers are definitively 
different, they lead to very similar computations. 
However, each of them has its own advantages and disadvantages.
On the one hand, the methods of~\cite{BoCaSc14} deal 
with characteristic polynomials only and cannot see invariant factors. On the other hand, they do not require the assumption $a_r(a) \neq 0$
(that is why we always took $a = 0$ in~\cite{BoCaSc14})
and can handle at the same time the local computations at the point
$a$ and \emph{around} it, i.e., they provide roughly speaking
a framework which allows to work modulo $(x{-}a)^{pn}$ for some 
integer $n$ fixed in advance (not just $n=1$) without increasing
the cost with respect to $p$.
The practical consequence is that the methods of the current
paper end up with algorithms whose cost is weakened by a 
factor $\sqrt d$ compared to what we might have expected at first.
It would be interesting to find a general theoretical
setting unifying the two approaches discussed above and allowing the 
benefits of both of them.

\section{Solving a physical application}\label{sec:application}

In \cite{BoHaJeMaZe10}, a globally nilpotent differential operator
$\phi_H^{(6)}$ was introduced in order to model the $6$-particle
contribution to the square lattice Ising model. As shown in \emph{loc.\
cit.}, this operator factors as a product of differential operators of
smaller orders. The factor which is the least understood is called
$L_{23}$ and has order $23$. Actually $L_{23}$ has not been computed so
far because its size is too large. Nevertheless there exists a multiple
of $L_{23}$ which has a more reasonable size: its bidegree is $(140,77)$.
It turns out that this multiple, say $L_{77}$, has been determined modulo
several prime numbers.
Based on this computation and using the strategy developed in this 
paper, we were able to study a bit further the factorization of 
$L_{23}$, answering a question of the authors of~\cite{BoHaJeMaZe10}.

\begin{prop}
\label{prop:L23}
The operator $L_{23}$ cannot be factorized as a product $L_{21} 
\cdot L_2$ where
$L_2$ is an operator of order $2$, and
$L_{21}$ is an operator of order $21$ whose differential
module is isomorphic to a symmetric product $\text{Sym}^n M$ for an
integer $n > 1$ and a differential module $M$.
\end{prop}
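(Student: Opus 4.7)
The plan is to argue by contradiction, using Theorem~\ref{theo:pcurvoperators} to compute the invariant factors of the $p$-curvature of $M_{L_{77}}$ modulo one (or a few) primes $p$ for which $L_{77}$ has been tabulated, and then to derive a structural obstruction that rules out the hypothetical factorization. Because any factorization of $L_{23}$ over $\mathbb{Q}(x)$ would reduce modulo almost every prime, exhibiting a single well-chosen $p$ where no such factorization exists is enough.

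First, I would exploit the fact that $L_{23}$ is a right divisor of $L_{77}$: then $M_{L_{23}}$ is a quotient of $M_{L_{77}}$, and by functoriality the characteristic polynomial $\chi_{L_{23}}(x,T)$ of its $p$-curvature divides that of $M_{L_{77}}$. The complementary cofactor corresponds to the smaller-order factors of $\phi_H^{(6)}$ that are already fully understood, so extracting $\chi_{L_{23}}$ from the computed invariant factors of $M_{L_{77}}$ is routine.

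Second, I would translate the hypothesis $L_{23} = L_{21} \cdot L_2$ with $M_{L_{21}} \simeq \mathrm{Sym}^n M$ into a rigid constraint on $\chi_{L_{23}}$. Right multiplication by $L_2$ identifies the kernel of the projection $M_{L_{23}} \twoheadrightarrow M_{L_2}$ with $M_{L_{21}}$, so we get a short exact sequence $0 \to M_{L_{21}} \to M_{L_{23}} \to M_{L_2} \to 0$ of differential modules. By functoriality of the $p$-curvature this forces $\chi_{L_{23}} = \chi_{\mathrm{Sym}^n \Psi} \cdot \chi_{L_2}$, where $\Psi$ is the $p$-curvature of $M$ (of dimension $r$) and $\chi_{\mathrm{Sym}^n\Psi}$ is the characteristic polynomial of $\mathrm{Sym}^n \Psi$. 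If $\lambda_1,\ldots,\lambda_r$ are the eigenvalues of $\Psi$ in a splitting field, those of $\mathrm{Sym}^n \Psi$ are the $\binom{n+r-1}{n}=21$ sums $n_1\lambda_1 + \cdots + n_r\lambda_r$ with $n_i \geq 0$ and $\sum n_i = n$.

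Third, I would enumerate the admissible shapes and check infeasibility. The equation $\binom{n+r-1}{n} = 21$ with $n \geq 2$ admits exactly the three solutions $(n,r) \in \{(2,6),(5,3),(20,2)\}$. For each shape, and for each candidate degree-$2$ cofactor $\chi_{L_2}$ inside $\chi_{L_{23}}$, I would test whether the $21$-element multiset of roots of the complementary degree-$21$ factor can be realised as the set of $n$-sums of a common $r$-tuple of eigenvalues; the most rigid case $(n,r)=(20,2)$ forces an arithmetic progression, while the other two amount to inspecting lattice relations among pairwise differences of roots. The main obstacle is controlling this combinatorial search, since in principle one must run over every degree-$2$ divisor of $\chi_{L_{23}}$ and over Galois orbits of the roots; I would tame the cost by working modulo a single moderate prime $p$ where $L_{77}$ is already available, by exploiting the extra rigidity of the full invariant factor list produced by Theorem~\ref{theo:pcurvoperators} (which typically leaves one nontrivial invariant factor per irreducible component and thus sharply constrains the admissible Jordan structure of $\mathrm{Sym}^n \Psi$), and by collapsing Galois orbits on the splitting field. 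Once incompatibility with all three shapes is observed at one prime, the claimed non-factorization over $\mathbb{Q}(x)$ follows.
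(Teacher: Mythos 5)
Your overall strategy (reduce modulo one good prime, divide out $L_{23}$ from the computed $p$-curvature of $L_{77}$, enumerate the three admissible shapes $(n,r)\in\{(2,6),(5,3),(20,2)\}$ from $\binom{n+r-1}{n}=21$, and derive a contradiction) matches the paper in spirit, but the invariant you propose to compare is the wrong one, and the argument as written would not close. The operator $\phi_H^{(6)}$ and hence $L_{23}$ is \emph{globally nilpotent}: the $p$-curvature of $M_{L_{23}}$ modulo $p$ is a nilpotent matrix, as is the $p$-curvature of any sub- or quotient module, and in particular those of $M_{L_2}$ and of $\mathrm{Sym}^n M$. Consequently every eigenvalue in your step ``$\chi_{L_{23}}=\chi_{\mathrm{Sym}^n\Psi}\cdot\chi_{L_2}$, eigenvalues of $\mathrm{Sym}^n\Psi$ are $n$-sums of $\lambda_1,\dots,\lambda_r$'' is simply $0$, the identity reads $T^{23}=T^{21}\cdot T^{2}$, and it is satisfied for \emph{every} nilpotent $\Psi$ and every choice of $(n,r)$. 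The eigenvalue/arithmetic-progression/lattice-relation analysis you sketch for the three shapes is vacuous; no contradiction can be extracted from the characteristic polynomial alone.

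What the paper actually uses is precisely the information the characteristic polynomial throws away: the Jordan type, read off from the ranks of successive powers. Evaluating at the single point $a=15$ (and for the single prime $p=32647$) with Proposition~\ref{prop:invfactlocal} — note this is the single-evaluation primitive, not the full interpolation in Theorem~\ref{theo:pcurvoperators}, which keeps the computation to a few minutes — one gets the invariant factors of $A_{77,p}(15)$, identifies the $23$-dimensional generalized $0$-eigenspace with the $p$-curvature of $L_{23}$, and records $\mathrm{rank}\big(A_{23,p}(15)^m\big)=23,17,11,6,3,0$. The block-triangular shape \eqref{eq:decAp} gives $0\le \mathrm{rank}(A_{23,p}^m)-\mathrm{rank}(A_{21,p}^m)\le 2$, and the symmetric-power structure gives the crucial rank identity $\mathrm{rank}(A_{21,p}^m)=\binom{n-1+\mathrm{rank}(A_{d,p}^m)}{n}$. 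These two constraints force $n=2$ and a unique candidate rank sequence $6,5,4,3,2,0$ for $A_{d,p}(15)^m$; but its successive differences $1,1,1,1,2$ are not non-increasing, contradicting the elementary fact that for a nilpotent matrix the sequence $m\mapsto \mathrm{rank}(N^m)-\mathrm{rank}(N^{m+1})$ is non-increasing. You would need to replace your eigenvalue comparison with this rank/Jordan-type analysis; the remark about ``full invariant factor list ... sharply constrains the admissible Jordan structure'' that you relegate to a secondary heuristic is in fact the entire content of the proof.

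Two smaller points. First, your reduction ``$\chi_{L_{23}}$ divides $\chi_{L_{77}}$ and the cofactor is known'' is not how the paper isolates $A_{23,p}$: it uses nilpotency directly, identifying $A_{23,p}(15)$ with the restriction of $A_{77,p}(15)$ to the generalized $0$-eigenspace, which is cleaner and does not require knowing the complementary factors of $\phi_H^{(6)}$. Second, in a situation where all the characteristic polynomials are $T^{\mathrm{order}}$, ``collapsing Galois orbits'' and running over ``degree-$2$ divisors of $\chi_{L_{23}}$'' both degenerate; they are artifacts of assuming a non-trivial spectrum.
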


\begin{proof}
We argue by contradiction by assuming that such a factorization exists.
This would imply that, for all $p$ the matrix $A_{23,p}$ of the $p$-curvature
of $L_{23} \text{ mod } p$ decomposes:
\begin{equation}
\label{eq:decAp}
A_{23,p} = \left(
\begin{matrix} A_{2,p} & \star \\ 0 & A_{21,p} \end{matrix}
\right)
\end{equation}
where $A_{2,p}$ (resp. $A_{21,p}$) is the square matrix of size $2$
(resp.~$21$) and $A_{21,p}$ is similar to a symmetric product of
a $d \times d$ matrix $A_{d,p}$. We now pick $p = 32647$ for which
$L_{77} \text{ mod } p$ is known. 
Using Proposition~\ref{prop:invfactlocal}, we were able to determine 
the invariant factors of the $p$-curvature of $A_{77,p}(15)$.
The computation ran actually rather fast: just a few minutes. 
We observed that the generalized eigenspace of $A_{77,p}(15)$ for
the eigenvalue $0$ has dimension $23$. Combining this with that fact
that $L_{23}$ is a factor of $L_{77}$ whose $p$-curvature is nilpotent, 
we deduce that the restriction of $A_{77,p}(15)$ to this characteristic 
space is similar to $A_{23,p}(15)$. Arguing similarly, we determine 
the Jordan type of $A_{23,p}(15)$:

\smallskip

\renewcommand{\arraystretch}{1.2}

\noindent \hfill
\begin{tabular}{@{}l@{}|c||p{1.5em}|p{1.5em}|p{1.5em}|p{1.5em}|p{1.5em}|p{1.5em}|@{}l@{}}
\cline{2-8} &
\centering $m$ &
\centering $0$ &
\centering $1$ &
\centering $2$ &
\centering $3$ &
\centering $4$ &
{\centering ${\geq}5$} & \\
\cline{2-8} &
rank$\big(A_{23,p}(15)^m\big)$ &
\centering $23$ &
\centering $17$ &
\centering $11$ &
\centering $6$ &
\centering $3$ &
\centering $0$ & \\
\cline{2-8}
\end{tabular}
\hfill \null

\renewcommand{\arraystretch}{1}

\medskip

\noindent
Moreover the writing \eqref{eq:decAp} would imply that for all $m$:
$$\begin{array}{rl}
& 0 \leq
\text{rank}\big(A_{23,p}(15)^m\big) - \text{rank}\big(A_{21,p}(15)^m\big) 
\leq 2 \smallskip \\
\text{and} & 
\text{rank}\big(A_{21,p}(15)^m\big) = 
\displaystyle \binom{n-1 + \text{rank}\big(A_{d,p}(15)^m\big)} n.
\end{array}$$
There is only one way to satisfy these numerical constraints which
consists in taking $n=2$ and:

\smallskip

\renewcommand{\arraystretch}{1.2}
\noindent
\hfill
\begin{tabular}{@{}l@{}|c||p{1.5em}|p{1.5em}|p{1.5em}|p{1.5em}|p{1.5em}|p{1.5em}|@{}l@{}}
\cline{2-8} &
\centering $m$ &
\centering $0$ &
\centering $1$ &
\centering $2$ &
\centering $3$ &
\centering $4$ &
{\centering ${\geq}5$} & \\
\cline{2-8} &
rank$\big(A_{d,p}(15)^m\big)$ &
\centering $6$ &
\centering $5$ &
\centering $4$ &
\centering $3$ &
\centering $2$ &
\centering $0$ & \\
\cline{2-8}
\end{tabular}
\hfill \null
\renewcommand{\arraystretch}{1}

\medskip

\noindent
Since the sequence $ 
\text{rank}\big(A_{d,p}(15)^m\big) - \text{rank}\big(A_{d,p}(15)^{m+1}\big)$ has to be 
non-increasing, this is impossible.
\end{proof}

\vspace{-3mm}

\bibliographystyle{abbrv}

\end{document}